\documentclass[10pt,twocolumn,journal]{IEEEtran}
\AtBeginDocument{\fontsize{9.8pt}{11.5pt}\selectfont}

\usepackage{amsmath,amssymb,amsfonts,amsthm,mathtools,bm}
\usepackage{braket}
\usepackage{algorithm}
\usepackage{algorithmic}
\usepackage{array}
\usepackage{multirow}
\usepackage{booktabs,tabularx,makecell}
\usepackage[caption=false,font=footnotesize]{subfig}
\usepackage[inline,shortlabels]{enumitem}
\usepackage{textcomp}
\usepackage{stfloats}
\usepackage{url}
\usepackage{verbatim}
\usepackage{graphicx}
\usepackage{cite}
\usepackage{microtype}
\usepackage{xcolor}
\usepackage[colorlinks=true,allcolors=blue]{hyperref}
\usepackage{orcidlink}

\theoremstyle{remark}

\usepackage{tikz}
\usepackage{pgfplots}
\pgfplotsset{compat=1.17}
\usetikzlibrary{arrows,shapes,positioning,calc,patterns,decorations.pathreplacing}
\usetikzlibrary{fit,backgrounds}
\usetikzlibrary{plotmarks}

\newtheorem{theorem}{Theorem}

\theoremstyle{definition}

\theoremstyle{remark}

\definecolor{qstDenseColor}{HTML}{4C4C4C}
\definecolor{qstLRColor}{HTML}{377EB8}
\definecolor{qstRGDColor}{HTML}{984EA3}
\definecolor{qstFWColor}{HTML}{FF7F00}
\definecolor{qstPropColor}{HTML}{1B9E77}

\DeclareRobustCommand{\qstDenseMark}{%
\tikz[baseline=-0.5ex]{
\draw[qstDenseColor,line width=0.85pt] (0,0)--(0.70,0);
\draw[qstDenseColor,only marks,mark=*,mark options={fill=qstDenseColor,draw=black,line width=0.25pt,scale=1.15}]
plot coordinates {(0.35,0)};
}}

\DeclareRobustCommand{\qstLRMark}{%
\tikz[baseline=-0.5ex]{
\draw[qstLRColor,line width=0.85pt] (0,0)--(0.70,0);
\draw[qstLRColor,only marks,mark=square*,mark options={fill=qstLRColor,draw=black,line width=0.25pt,scale=1.15}]
plot coordinates {(0.35,0)};
}}

\DeclareRobustCommand{\qstRGDMark}{%
\tikz[baseline=-0.5ex]{
\draw[qstRGDColor,line width=0.85pt] (0,0)--(0.70,0);
\draw[qstRGDColor,only marks,mark=triangle*,mark options={fill=qstRGDColor,draw=black,line width=0.25pt,scale=1.25}]
plot coordinates {(0.35,0)};
}}

\DeclareRobustCommand{\qstFWMark}{%
\tikz[baseline=-0.5ex]{
\draw[qstFWColor,line width=0.85pt] (0,0)--(0.70,0);
\draw[qstFWColor,only marks,mark=diamond*,mark options={fill=qstFWColor,draw=black,line width=0.25pt,scale=1.25}]
plot coordinates {(0.35,0)};
}}

\DeclareRobustCommand{\qstPropMark}{%
\tikz[baseline=-0.5ex]{
\draw[qstPropColor,line width=0.85pt] (0,0)--(0.70,0);
\draw[qstPropColor,only marks,mark=+,mark options={qstPropColor,line width=1.0pt,scale=1.55}]
plot coordinates {(0.35,0)};
}}

\DeclareRobustCommand{\qstTermMark}{%
\tikz[baseline=-0.5ex]{
\draw[black,line width=1.0pt] (0,0)--(0.22,0.22);
\draw[black,line width=1.0pt] (0,0.22)--(0.22,0);
}}

\DeclareRobustCommand{\qstCapLine}{%
\tikz[baseline=-0.5ex]{
\draw[black,dashed,line width=0.9pt] (0,0)--(0.70,0);
}}

\definecolor{algAtomColor}{HTML}{7F7F7F}
\definecolor{algCoeffColor}{HTML}{2CA02C}
\definecolor{algOracleColor}{HTML}{9467BD}
\definecolor{algFullColor}{HTML}{1F77B4}
\definecolor{algErrorColor}{HTML}{8B0000}
\definecolor{algKstarColor}{HTML}{003366}

\DeclareRobustCommand{\algAtomMark}{%
\tikz[baseline=-0.5ex]{
\draw[algAtomColor,dotted,line width=0.9pt] (0,0)--(0.70,0);
\draw[algAtomColor,only marks,mark=*,mark options={fill=algAtomColor,draw=black,line width=0.25pt,scale=1.15}]
plot coordinates {(0.35,0)};
}}

\DeclareRobustCommand{\algCoeffMark}{%
\tikz[baseline=-0.5ex]{
\draw[algCoeffColor,dashed,line width=0.9pt] (0,0)--(0.70,0);
\draw[algCoeffColor,only marks,mark=square*,mark options={fill=algCoeffColor,draw=black,line width=0.25pt,scale=1.15}]
plot coordinates {(0.35,0)};
}}

\DeclareRobustCommand{\algOracleMark}{%
\tikz[baseline=-0.5ex]{
\draw[algOracleColor,dash pattern=on 4pt off 1.4pt on 1pt off 1.4pt,line width=0.9pt] (0,0)--(0.70,0);
\draw[algOracleColor,only marks,mark=diamond*,mark options={fill=algOracleColor,draw=black,line width=0.25pt,scale=1.25}]
plot coordinates {(0.35,0)};
}}

\DeclareRobustCommand{\algFullMark}{%
\tikz[baseline=-0.5ex]{
\draw[algFullColor,line width=0.9pt] (0,0)--(0.70,0);
\draw[algFullColor,only marks,mark=triangle*,mark options={fill=algFullColor,draw=black,line width=0.25pt,scale=1.25}]
plot coordinates {(0.35,0)};
}}

\DeclareRobustCommand{\algRefactorLine}{%
\tikz[baseline=-0.5ex]{
\draw[black,dashed,line width=0.9pt] (0,0)--(0.70,0);
}}

\DeclareRobustCommand{\algErrorLine}{%
\tikz[baseline=-0.5ex]{
\draw[algErrorColor,dash pattern=on 4pt off 1.4pt on 1pt off 1.4pt,line width=0.9pt] (0,0)--(0.70,0);
}}

\DeclareRobustCommand{\algKstarLine}{%
\tikz[baseline=-0.5ex]{
\draw[algKstarColor,dotted,line width=1.05pt] (0,0)--(0.70,0);
}}
\title{Rank-Adaptive Matrix-Free Atomic \\ Quantum State Tomography}

\author{%
Amirhossein Taherpour~\orcidlink{0000-0003-4647-102X},%
\ Alireza Sadeghi~\orcidlink{0000-0003-1280-7592},%
\ and Georgios B.~Giannakis~\orcidlink{0000-0002-0196-0260},~\IEEEmembership{Fellow,~IEEE}%
\thanks{Amirhossein Taherpour, Alireza Sadeghi, and Georgios B.~Giannakis are with the Dept. of ECE, Univ. of Minnesota, Minneapolis, MN 55455, USA. E-mails: \{taher054, sadeg012, georgios\}@umn.edu. Part of this research was supported by NSF grants 2212318 and 2312547.} % \textit{Corresponding author: G. B.~Giannakis.}}%
}

\IEEEaftertitletext{\vspace{-0.8cm}}
\begin{document}

\maketitle
\begin{abstract}
Quantum state tomography estimates an unknown density operator from
measurement data. Dense reconstruction however, can be impractical for many-qubit
systems because the Hilbert-space dimension grows exponentially. This contribution
develops a rank-adaptive matrix-free approach to low-rank quantum state
tomography based on rank-one atomic coordinates. The density operator is
represented as a convex combination of pure-state atoms, which preserves
positivity and unit trace while avoiding dense density, measurement, and
gradient matrices. The resultant algorithm combines atom updates, simplex-constrained
coefficient reweighting, and periodic spectral refactorization using only
predicted probability vectors, atom vectors, and descriptor-level measurement
actions. These computations decompose across measurement outcomes, and admit a
master--worker implementation with the measurements partitioned across
workers. A rank penalty adapts the representation size during optimization.
Provable feasibility, prediction consistency, monotone descent of the
penalized objective, and an exact spectral proximal characterization of the
refactorization step are established. Simulations with Pauli measurements show favorable
accuracy--runtime--memory tradeoffs relative to competing alternatives.
\end{abstract}

\begin{IEEEkeywords}
Quantum state tomography, low-rank estimation, matrix-free optimization, rank adaptation.
\end{IEEEkeywords}

\vspace{-0.2 cm}
\section{Introduction}
\label{sec:introduction}

Quantum state tomography (QST) estimates an unknown density operator from
measurement data \cite{NielsenChuang2010QuantumComputation,Hradil1997QuantumStateEstimation}.
It is central to quantum-device characterization, but scalability is a key challenge.
For \(N\) qubits, the Hilbert-space dimension is \(D=2^N\), and a generic
density matrix has \(O(D^2)\) entries. Scalable QST must therefore leverage structure in the state, the measurements, or the reconstruction algorithm~\cite{taherpour2026distributed}.

One broad direction exploits low rank. Compressed-sensing and
sample-complexity studies show that nearly pure or low-rank states can be
recovered from fewer measurements
\cite{Gross2010CSQST,Flammia2012CSQST,Haah2017SampleOptimalQST}. On the
algorithmic side, dense maximum-likelihood and projected-gradient methods have been investigated
\cite{Bolduc2017PGDQST}. Low-rank projected, factored, and
Riemannian methods reduce the number of variables by constraining or
parametrizing the rank
\cite{Burer2003LowRankSDP,Kyrillidis2018NonconvexQST,Hsu2024RGDQST,Wang2024EfficientFGD},
and stochastic or online variants further improve update efficiency
\cite{Gaikwad2025GDQST,Cai2025OnlineSGDQST}. These methods improve
scalability, but they typically still assume a fixed rank or rely on dense
matrix and spectral operations.

A second direction imposes stronger models on the state or changes the
reconstruction target. Tensor-network tomography uses structured
representations such as matrix-product or locally purified models
\cite{Cramer2010EfficientQST,Baumgratz2013ScalableDensity,Cai2026MPOQST}.
Neural-network methods use expressive variational models
\cite{Torlai2018NNQST}, and classical-shadow methods estimate many observables
without necessarily producing an explicit low-rank density estimate
\cite{Huang2020ClassicalShadows}. Rank selection has also been studied through
penalized and model-selection criteria
\cite{Alquier2013RankPenalizedQST,Guta2012RankModelQST}. These directions are
powerful, but they do not simultaneously yield explicit low-rank density reconstruction, matrix-free operations, and adaptive rank control.

This paper bridges the gap by developing a rank-adaptive matrix-free method
for low-rank QST based on rank-one atomic coordinates. The density operator is
represented as a convex combination of pure-state atoms, which preserves
positivity and unit trace by construction. For fixed atoms, the predicted
probabilities are linear in the mixture weights. This enables objective
evaluation and updates using atom responses and descriptor-level measurement
actions, without forming dense density, measurement, or gradient matrices. The
algorithm combines atom updates, coefficient reweighting, and periodic
spectral refactorization based on Ritz candidates
\cite{Lanczos1950Iteration,Saad2011NumericalMethods}. The penalty allows rank to grow or shrink during optimization and links the method to sparse semidefinite and Frank--Wolfe iterations~\cite{Hazan2008SparseSDP},~\cite{Jaggi2013FrankWolfe}.

The main contributions are as follows. C1) We formulate low-rank QST in
atomic coordinates that preserve the physical density constraints while
avoiding dense \(D\times D\) storage. C2) We develop a measurement-partitioned
matrix-free algorithm that updates atoms, weights, and the rank
using predicted probability vectors, atom vectors, and descriptor-level measurement
actions only. C3) We prove feasibility, prediction consistency, monotone
descent of the penalized objective, and an exact spectral proximal
characterization of the refactorization step. Simulations with Pauli
measurements demonstrate favorable accuracy--runtime--memory tradeoffs relative to existing approaches.

\noindent{\it Notation.}
Lowercase letters denote scalars, bold lowercase letters vectors, and bold
uppercase letters matrices; \(\boldsymbol\rho\) denotes a density operator.
Hats denote empirical quantities, e.g., \(\widehat{\boldsymbol\pi}\). For a
closed set \(\mathcal C\), \(\operatorname{proj}_{\mathcal C}(\cdot)\)
denotes Euclidean projection.
For an index set \(\mathcal I\), \([\boldsymbol z]_{\mathcal I}\) denotes the
subvector of \(\boldsymbol z\) indexed by \(\mathcal I\).

The rest of the paper is organized as follows. Section~\ref{sec:formulation}
introduces the low-rank atomic QST. Section~\ref{sec:distributed-matrix-free-qst}
develops the distributed matrix-free algorithm and its analysis.
Section~\ref{sec:simulations} presents numerical tests. Finally,
Section~\ref{sec:conclusion} concludes the paper.

% \vspace{-0.25 cm}
\section{Low-Rank Atomic Tomography}
\label{sec:formulation}

\begin{figure*}[t]
    \centering
    \includegraphics[width=\textwidth]{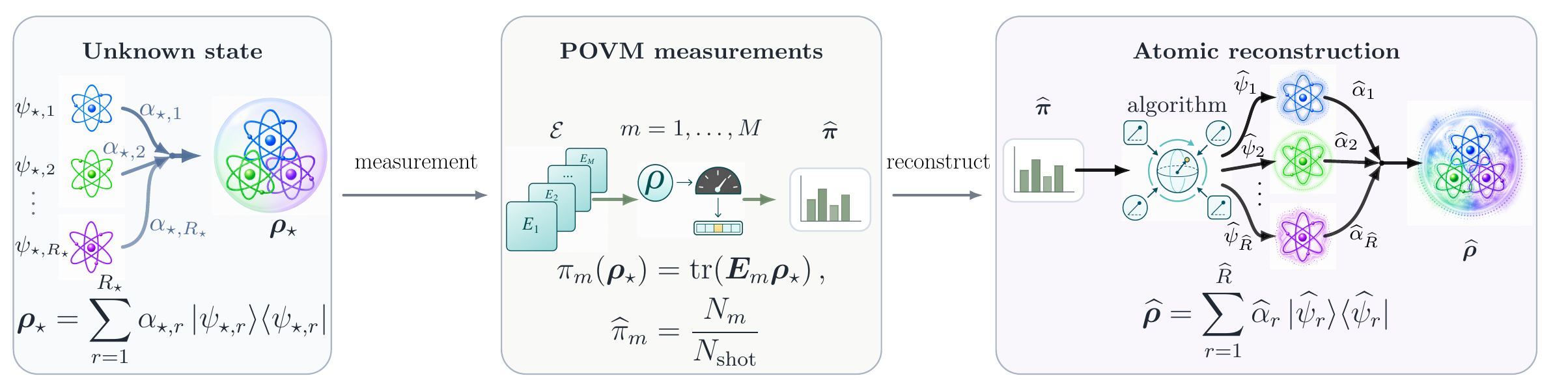}
    \vspace{-0.5 cm}
    \caption{Overview of atomic state tomography. An unknown quantum state is represented as a convex combination of pure-state atoms, measured through a POVM to yield empirical outcome frequencies, and it is reconstructed by estimating the active atoms and their mixture weights.}
    \vspace{-0.5 cm}
    \label{fig:atomic}
\end{figure*}

Consider an \(N\)-qubit quantum system with Hilbert space
\(\mathcal H\cong(\mathbb{C}^{2})^{\otimes N}\), having dimension \(D=2^N\).
A pure state can be represented by a unit-norm vector
\(\ket{\boldsymbol\psi}\in\mathcal H\), or equivalently by the rank-one
density operator
\(\ket{\boldsymbol\psi}\bra{\boldsymbol\psi}\). More generally, pure and
mixed states are described by density operators
\(\boldsymbol\rho\in\mathbb C^{D\times D}\) belonging to the feasible set
\cite{NielsenChuang2010QuantumComputation}
\begin{equation}
\label{eq:density-set}
\mathcal D
\coloneqq
\left\{
\boldsymbol\rho\in\mathbb C^{D\times D}
:
\boldsymbol\rho\succeq 0,\ 
\operatorname{tr}(\boldsymbol\rho)=1
\right\}.
\end{equation}
Pure states are the rank-one elements of \(\mathcal D\), whereas mixed states
are general elements of \(\mathcal D\). In this work, we focus on low-rank
density operators, which arise when the state can be represented as a mixture
of only a few pure-state atoms. For an active atom count \(R\ll D\), let
\[
\boldsymbol\Psi
:=
[\boldsymbol\psi_1,\ldots,\boldsymbol\psi_R]
\in\mathbb C^{D\times R},
\qquad
\|\boldsymbol\psi_r\|_2=1,\quad r=1,\ldots,R
\]
and let
\(\boldsymbol\alpha\in\boldsymbol\Delta_R\), where
\(\boldsymbol\Delta_R\) denotes the \(R\)-dimensional probability simplex
\begin{equation}
\nonumber
\boldsymbol\Delta_R
\coloneqq
\left\{
\boldsymbol\alpha\in\mathbb R^R_+
:
\alpha_r\ge 0,\ 
\sum_{r=1}^{R}\alpha_r=1
\right\}.
\end{equation}
The density operator then admits the atomic representation
\begin{equation}
\label{eq:atomic-density}
\boldsymbol\rho(\boldsymbol\alpha,\boldsymbol\Psi)
=
\sum_{r=1}^{R}
\alpha_r
\ket{\boldsymbol\psi_r}\bra{\boldsymbol\psi_r}
=
\boldsymbol\Psi
\operatorname{diag}(\boldsymbol\alpha)
\boldsymbol\Psi^* .
\end{equation}
This construction ensures
\begin{equation}
\label{eq:atomic-feasibility}
\nonumber
\boldsymbol\rho(\boldsymbol\alpha,\boldsymbol\Psi)\succeq0,
\quad
\operatorname{tr}
\left(
\boldsymbol\rho(\boldsymbol\alpha,\boldsymbol\Psi)
\right)=1,
\quad
\operatorname{rank}
\left(
\boldsymbol\rho(\boldsymbol\alpha,\boldsymbol\Psi)
\right)\le R .
\end{equation}
Thus, \(R\) is the number of active atoms and an upper bound on the matrix rank
of the represented density operator.

Our objective is to recover an unknown low-rank density operator
\(\boldsymbol\rho\in\mathcal D\) from measurement data; see
Fig.~\ref{fig:atomic}. Measurements are described by a positive
operator-valued measure (POVM)
\(\mathcal E :=\{\boldsymbol E_m\}_{m=1}^{M}\), where each
\(\boldsymbol E_m\) corresponds to one possible outcome
\(y\in\{1,\ldots,M\}\), satisfies
\(\boldsymbol E_m\succeq0\), and the POVM elements obey
\(\sum_{m=1}^{M}\boldsymbol E_m=\boldsymbol I_D\). The Born rule implies
\cite{NielsenChuang2010QuantumComputation}
\begin{equation}
\mathbb P
\left(
y=m\mid \boldsymbol\rho,\mathcal E
\right)
=
\operatorname{tr}
\left(
\boldsymbol E_m\boldsymbol\rho
\right),
\qquad
m=1,\ldots,M .
\end{equation}
For a fixed POVM \(\mathcal E\), define the Born probability map
\begin{equation}
\label{eq:born-map}
\boldsymbol\pi(\boldsymbol\rho)
:=
\left[
\pi_1(\boldsymbol\rho),
\ldots,
\pi_M(\boldsymbol\rho)
\right]^\top,
\qquad
\pi_m(\boldsymbol\rho)
\coloneqq
\operatorname{tr}
\left(
\boldsymbol E_m\boldsymbol\rho
\right).
\end{equation}
Since \(\boldsymbol\rho\succeq0\), \(\operatorname{tr}(\boldsymbol\rho)=1\),
\(\boldsymbol E_m\succeq0\), and
\(\sum_{m=1}^{M}\boldsymbol E_m=\boldsymbol I_D\), the vector
\(\boldsymbol\pi(\boldsymbol\rho)\) is a valid probability mass function, that is
\(\boldsymbol\pi(\boldsymbol\rho)\in\boldsymbol\Delta_M\). Thus, the measurement
outcome is modeled as
\begin{equation}
y\sim\mathrm{Categorical}
\left(
\boldsymbol\pi(\boldsymbol\rho)
\right).
\end{equation}
Given \(N_{\rm shot}\) independent outcomes
\(\{y_i\}_{i=1}^{N_{\rm shot}}\), let
\[
N_m=\sum_{i=1}^{N_{\rm shot}}\mathbf{1}\{y_i=m\},
\qquad m=1,\ldots,M .
\]
The empirical probability vector is
\begin{equation}
\label{eq:empirical-pmf}
\widehat{\pi}_m
=
\frac{N_m}{N_{\rm shot}},
\qquad
m=1,\ldots,M
\end{equation}
which is the maximum-likelihood estimator of the categorical probabilities
\cite{Hradil1997QuantumStateEstimation}. If several measurement settings are
used, the index \(m\) can be understood as a combined setting--outcome index,
so that the full experiment is still represented by a single POVM.

Given \(\widehat{\boldsymbol\pi}\) and \(\mathcal E\), the goal is to estimate
a low-rank density operator whose Born probabilities explain the empirical
frequencies. Equivalently, under \eqref{eq:atomic-density}, the goal is to
estimate an active atom count \(\widehat R\), mixture weights
\(\widehat{\boldsymbol\alpha}\in\boldsymbol\Delta_{\widehat R}\), and atoms
\[
\widehat{\boldsymbol\Psi}
=
[
\widehat{\boldsymbol\psi}_1,\ldots,
\widehat{\boldsymbol\psi}_{\widehat R}
],
\]
to obtain the density estimate
\begin{equation}
\label{eq:qst-estimator}
\widehat{\boldsymbol\rho}
=
\sum_{r=1}^{\widehat R}
\widehat\alpha_r
\ket{\widehat{\boldsymbol\psi}_r}
\bra{\widehat{\boldsymbol\psi}_r}.
\end{equation}

We fit the Born probabilities using the \(\varepsilon\)-stabilized empirical
negative log-likelihood, or cross-entropy surrogate
\begin{equation}
\label{eq:qst-loss}
\mathcal L_\varepsilon(\boldsymbol\pi)
\coloneqq
-
\sum_{m=1}^{M}
\widehat{\pi}_m
\log(\pi_m+\varepsilon),
\qquad
\boldsymbol \pi\in\boldsymbol\Delta_M
\end{equation}
where \(\varepsilon>0\) avoids singular logarithms when predicted
probabilities vanish. When \(\varepsilon=0\), this reduces to the usual
empirical negative log-likelihood for categorical observations.

For each atom \(\ket{\boldsymbol\psi_r}\), define its response to the POVM by
\begin{equation}
\label{eq:atom-response}
a_m(\boldsymbol\psi_r)
\coloneqq
\bra{\boldsymbol\psi_r}
\boldsymbol E_m
\ket{\boldsymbol\psi_r},
\qquad
m=1,\ldots,M .
\end{equation}
Collecting these responses gives
\[
\boldsymbol a(\boldsymbol\psi_r)
:=
[
a_1(\boldsymbol\psi_r),
\ldots,
a_M(\boldsymbol\psi_r)
]^\top .
\]
Because \(\mathcal E\) is a POVM, each atom response is itself a probability
mass vector
\[
\boldsymbol a(\boldsymbol\psi_r)\in\boldsymbol\Delta_M,
\qquad r=1,\ldots,R .
\]
For the atom matrix \(\boldsymbol\Psi\), define the atom-response matrix
\begin{equation}
\label{eq:atom-measurement-matrix}
\boldsymbol A(\boldsymbol\Psi)
:=
[
\boldsymbol a(\boldsymbol\psi_1),
\ldots,
\boldsymbol a(\boldsymbol\psi_R)
],
\qquad
\boldsymbol A(\boldsymbol\Psi)\in\mathbb R_+^{M\times R}.
\end{equation}
Substituting the density model \eqref{eq:atomic-density} into the Born
map \eqref{eq:born-map} yields
\begin{equation}
\label{eq:atomic-born-map}
\boldsymbol\pi
\left(
\boldsymbol\rho(\boldsymbol\alpha,\boldsymbol\Psi)
\right)
=
\boldsymbol A(\boldsymbol\Psi)\boldsymbol\alpha .
\end{equation}
Thus, for fixed atoms, the predicted measurement probabilities are a
simplex-weighted mixture of atom responses.

To allow a data-adaptive rank selection, we
choose a rank budget \(R_{\max}\), and use the penalized objective
\begin{equation}
\label{eq:fixed-rank-objective}
J_R(\boldsymbol\alpha,\boldsymbol\Psi)
\coloneqq
\mathcal L_\varepsilon
\left(
\boldsymbol A(\boldsymbol\Psi)\boldsymbol\alpha
\right)
+
\mu R
\end{equation}
where \(\mu>0\) controls the cost of adding rank-one atoms. The penalty
therefore acts on the active atom count, which upper-bounds the rank of the
resulting density operator. Similar rank-penalized and rank-selection ideas
have been used in low-rank quantum-state estimation
\cite{Alquier2013RankPenalizedQST,Guta2012RankModelQST}. The resulting
rank-adaptive atomic tomography problem is
\begin{align}
\label{eq:rank-adaptive-atomic-qst}
(\widehat R,\widehat{\boldsymbol\alpha},\widehat{\boldsymbol\Psi})
\in
\operatorname*{arg\,min}_{R,\boldsymbol\alpha,\{\ket{\boldsymbol\psi_r}\}_{r=1}^{R}}
&\quad
J_R(\boldsymbol\alpha,\boldsymbol\Psi)
\\
\text{s.t.}\qquad
&\quad
1\le R\le R_{\max},
\nonumber\\
&\quad
\boldsymbol\alpha\in\boldsymbol\Delta_R,
\nonumber\\
&\quad
\|\boldsymbol\psi_r\|_2=1,
\qquad r=1,\ldots,R .
\nonumber
\end{align}

Solving \eqref{eq:rank-adaptive-atomic-qst} is challenging because it couples the
discrete \(R\) with the continuous variables
\(\boldsymbol\alpha\) and \(\boldsymbol\Psi\), and remains nonconvex in the 
atoms even for fixed \(R\), since each atom enters through
\(\ket{\boldsymbol\psi_r}\bra{\boldsymbol\psi_r}\). While low-rank and
compressed-sensing QST rely on the fact that nearly pure states can be learned
from fewer measurements than generic dense states
\cite{Gross2010CSQST,Flammia2012CSQST}, our setting must also maintain a
matrix-free atomic representation and adapt the active rank.

The next section develops a distributed matrix-free algorithm for
\eqref{eq:rank-adaptive-atomic-qst}. It uses
\eqref{eq:atomic-born-map}, \eqref{eq:fixed-rank-objective}, and
\eqref{eq:likelihood-ratio-vector}--\eqref{eq:negative-gradient-product} to
update atoms, reweight coefficients, and perform rank-adaptive refactorization
without forming dense matrices.

%%%%%%%%%%%%%%%%%%%%%%%%%%%%%%%%%%%%%%%%%%%%%%%%%%%%%%%%%%%%%%%%%%%%%%%%%%%%%%%%%%%%
\vspace{-0.25 cm}
\section{Distributed Matrix-Free Low-Rank \\ Quantum State Tomography}
\label{sec:distributed-matrix-free-qst}

% We solve \eqref{eq:rank-adaptive-atomic-qst} using distributed master-workers framework. 
The main idea here is to solve \eqref{eq:rank-adaptive-atomic-qst} using distributed master-workers framework. The optimization variables including rank $R_t$, mixture weights $\boldsymbol{\alpha}_t$, and the collection of atoms $\boldsymbol{\Psi}_t$, gathered in \(\mathcal X_t=(R_t,\boldsymbol\alpha_t,\boldsymbol\Psi_t)\) at the master node, while the workers handle the expensive measurement-dependent computations. 
While workers provide update to master node, the master decides if an update should be accepted or not. 

To solve \eqref{eq:rank-adaptive-atomic-qst}, we use alternating updates for
the continuous variables together with a slower rank-adaptation mechanism. Let us define the likelihood-ratio vector
\begin{equation}
\label{eq:likelihood-ratio-vector}
\boldsymbol c(\boldsymbol\alpha,\boldsymbol\Psi)
\coloneqq
\widehat{\boldsymbol\pi}
\oslash
\left(
\boldsymbol A(\boldsymbol\Psi)\boldsymbol\alpha
+
\varepsilon\boldsymbol 1
\right),
\end{equation}
where \(\oslash\) denotes componentwise division. For fixed
\(\boldsymbol\Psi\), the objective in \eqref{eq:rank-adaptive-atomic-qst} is convex in \(\boldsymbol\alpha\), and
the chain rule gives the gradient with respect to the mixture weights as
\begin{equation}
\label{eq:weight-gradient}
\nabla_{\boldsymbol\alpha}
J_R(\boldsymbol\alpha,\boldsymbol\Psi)
=
-
\boldsymbol A(\boldsymbol\Psi)^\top
\boldsymbol c(\boldsymbol\alpha,\boldsymbol\Psi).
\end{equation}
Using this gradient together with projection onto the simplex yields a
projected-gradient update for \(\boldsymbol\alpha\). 
For fixed \(R\) and \(\boldsymbol\alpha\), one can also update the active atoms
using the gradient of the loss with respect to each atom. In Wirtinger form,
the gradient with respect to the conjugate atom variable is
\begin{equation}
\label{eq:atom-wirtinger-gradient}
\nabla_{\boldsymbol\psi_r^*}
J_R(\boldsymbol\alpha,\boldsymbol\Psi)
=
-
\alpha_r
\boldsymbol G(\boldsymbol\alpha,\boldsymbol\Psi)
\boldsymbol\psi_r,
\qquad r=1,\ldots,R .
\end{equation}
Under the unit-norm constraint
\(\|\boldsymbol\psi_r\|_2=1\), the projected descent direction on the tangent
space of the complex sphere~is
\begin{equation}
\label{eq:atom-projected-descent-direction}
\boldsymbol d_r
=
\alpha_r
\left(
\boldsymbol I_D
-
\boldsymbol\psi_r\boldsymbol\psi_r^*
\right)
\boldsymbol G(\boldsymbol\alpha,\boldsymbol\Psi)
\boldsymbol\psi_r .
\end{equation}
This descent direction is used later for the atom updates.

The more challenging part remains the update of the discrete rank \(R\). It is a
discrete model-order variable, not admitting gradient-based updates. Instead,
we update \(R\) on a slower outer time-scale using a rank-selection mechanism.
The details of this procedure are described bellow. The main idea is to update the
rank \(R\) through density-operator-level proposals. At the density level,
with \(\boldsymbol\rho=\boldsymbol\rho(\boldsymbol\alpha,\boldsymbol\Psi)\),
the \(\varepsilon\)-stabilized cross-entropy loss gradient is
\begin{equation}
\label{eq:density-gradient}
\nabla_{\boldsymbol\rho}
\mathcal L_\varepsilon
\left(
\boldsymbol\pi(\boldsymbol\rho)
\right)
=
-
\sum_{m=1}^{M}
c_m(\boldsymbol\alpha,\boldsymbol\Psi)
\boldsymbol E_m .
\end{equation}
This gradient is used to form the negative density-level likelihood-gradient
matrix
\begin{equation}
\label{eq:negative-gradient-matrix}
\boldsymbol G(\boldsymbol\alpha,\boldsymbol\Psi)
\coloneqq
\sum_{m=1}^{M}
c_m(\boldsymbol\alpha,\boldsymbol\Psi)
\boldsymbol E_m
=
-
\nabla_{\boldsymbol\rho}
\mathcal L_\varepsilon
\left(
\boldsymbol\pi(\boldsymbol\rho)
\right).
\end{equation}
The matrix \(\boldsymbol G(\boldsymbol\alpha,\boldsymbol\Psi)\) is the
negative density-level likelihood gradient. It is used both for local atom
updates and for density-level rank adaptation as will be illustrated later.

Critical here is that the rank-adaptation and atom-update steps do not require
explicitly forming or storing the dense matrix
\(\boldsymbol G(\boldsymbol\alpha,\boldsymbol\Psi)\). Instead, they only
require products of form
\begin{equation}
\label{eq:negative-gradient-product}
\boldsymbol G(\boldsymbol\alpha,\boldsymbol\Psi)\boldsymbol v
=
\sum_{m=1}^{M}
c_m(\boldsymbol\alpha,\boldsymbol\Psi)
\boldsymbol E_m\boldsymbol v ,
\qquad
\boldsymbol v\in\mathbb C^D .
\end{equation}
Thus, when the POVM effects admit structured representations, as in Pauli,
local, sparse, or tensor-product measurements, atom responses
\eqref{eq:atom-response} and products of the form
\eqref{eq:negative-gradient-product} can be evaluated through
descriptor-level measurement actions, without storing dense POVM matrices or
dense gradient matrices. Once a density-level proposal is formed using these
matrix-free products, the active rank \(R\) is updated through the
rank-adaptation mechanism described below.

% Therefore, the optimization variables are updated centrally, while the expensive POVM-dependent products and response evaluations are distributed over outcome shards.
\vspace{-0.25 cm}
\subsection{Algorithmic setup}
\label{subsec:algorithmic-setup}

The iteration indices where workers and master iterate are indexed by $t \in \mathcal{T} := \{0,\ldots,T-1\}$. The entire time-slots $\mathcal{T}$ is partitioned into three disjoint iteration sets
\begin{equation}
\label{eq:iteration-partition}
\mathcal{T}
=
\mathcal T_{\boldsymbol{\alpha}}     \,
\cup            \,
\mathcal T_{\boldsymbol{\psi}}   \,
\cup            \,
\mathcal T_{\boldsymbol{\rho}}, 
\end{equation}
where $\mathcal{T}_{\boldsymbol{\alpha}}, \mathcal{T}_{\boldsymbol{\psi}}$, and $\mathcal{T}_{\boldsymbol{\rho}}$ represent index set of time-slots where variables $\{\boldsymbol{\psi}_r\}_{r=1}^{R}$, $\boldsymbol{\alpha}$, and $\boldsymbol{\rho}$ are updated accordingly. When \(t\in\mathcal T_{\boldsymbol{\alpha}}\), the master updates the active coefficients and may
truncate small entries, while 
\(t\in\mathcal T_{\boldsymbol{\psi}}\) workers help to update active atoms, finally when \(t\in\mathcal T_{\boldsymbol{\rho}}\), workers provide
matrix-vector products that enable memory-efficient spectral step, enabling rank-updates.

At iteration $t$, the master node have the current
rank $R_t$, the mixture weights $\boldsymbol{\alpha}_t$, and the collection of
atoms $\boldsymbol{\Psi}_t$ stored in current algorithmic state
\begin{align}
& \mathcal X_t :=
(R_t,\boldsymbol\alpha_t,\boldsymbol\Psi_t),  
\end{align}
where $\boldsymbol\alpha_t \! \in \! {\boldsymbol{\Delta}}_{R_t}, \boldsymbol\Psi_t \! =\!
\bigl[
\ket{\boldsymbol\psi_{t,1}},\ldots,
\ket{\boldsymbol\psi_{t,R_t}}
\bigr]$, and $\|\boldsymbol\psi_{t,r}\|_2 = 1$ for $r=1,\ldots,R_t$. For notational convenience, for a given
\(\mathcal X=(R,\boldsymbol\alpha,\boldsymbol\Psi)\), define
\[
\boldsymbol\rho_{\mathcal X}
\coloneqq
\boldsymbol\rho(\boldsymbol\alpha,\boldsymbol\Psi),
\qquad
\boldsymbol A_{\mathcal X}
\coloneqq
\boldsymbol A(\boldsymbol\Psi),
\]
and
\[
\boldsymbol\pi_{\mathcal X}
\coloneqq
\boldsymbol A(\boldsymbol\Psi)\boldsymbol\alpha,
\qquad
J(\mathcal X)
\coloneqq
J_R(\boldsymbol\alpha,\boldsymbol\Psi),
\]
where \(\boldsymbol\rho(\boldsymbol\alpha,\boldsymbol\Psi)\),
\(\boldsymbol A(\boldsymbol\Psi)\),
\(\boldsymbol\pi(\boldsymbol\rho)\), and
\(J_R(\boldsymbol\alpha,\boldsymbol\Psi)\) are given by
\eqref{eq:atomic-density}, \eqref{eq:atom-measurement-matrix},
\eqref{eq:atomic-born-map}, and \eqref{eq:fixed-rank-objective},
respectively. Also define
\[
\boldsymbol c_{\boldsymbol{\mathcal X}}
\coloneqq
\widehat{\boldsymbol\pi}
\oslash
\bigl(
\boldsymbol\pi_{\boldsymbol{\mathcal X}}+\varepsilon\boldsymbol 1
\bigr), \quad 
\boldsymbol G_{\boldsymbol{\mathcal X}}
\coloneqq
\sum_{m=1}^{M}
[\boldsymbol{c}_{\boldsymbol{\mathcal X}}]_m \boldsymbol E_m ,
\]
which coincide with \eqref{eq:likelihood-ratio-vector} and
\eqref{eq:negative-gradient-matrix}.

For notational brevity set $\boldsymbol\rho_t
\coloneqq
\boldsymbol\rho_{\boldsymbol{\mathcal X}_t}$, $\boldsymbol A_t
\coloneqq
\boldsymbol A_{\boldsymbol{\mathcal X}_t}$, $\boldsymbol\pi_t
\coloneqq
\boldsymbol\pi_{\boldsymbol{\mathcal X}_t}$, $J_t
\coloneqq
J(\mathcal X_t),
\boldsymbol c_t
\coloneqq
\boldsymbol c_{\boldsymbol{\mathcal X}_t},
\boldsymbol G_t
\coloneqq
\boldsymbol G_{\boldsymbol{\mathcal X}_t}$, and for
\(\boldsymbol\Psi_t=[\boldsymbol\psi_{t,1},\ldots,\boldsymbol\psi_{t,R_t}]\),
let $
\boldsymbol A_t
:=
\big[
\boldsymbol a(\boldsymbol\psi_{t,1}),
\ldots,
\boldsymbol a(\boldsymbol\psi_{t,R_t})
\big]$. This notation will be used throughout algorithmic development.

Critical to master node is to offer rank adaptation scheme. To this aim it must judiciously prune atoms with negligible weights. Specifically, after a candidate update, suppose the temporary representation contains $q$-atoms with mixture weights $\boldsymbol{\beta} \in \boldsymbol{\Delta}_q$.  Given a pruning threshold $\tau$, the master node retains the indices of those atoms whose weights exceed $\tau$ in 
\[
\mathcal I_\tau(\boldsymbol\beta)
\coloneqq
\{r\in\{1,\ldots,q\}:\beta_r>\tau\}.
\]
The corresponding mixture weights of retained indices are renormalized to form a reduced dimension probability vector 
\[
\check{\boldsymbol{\beta}} := \frac{
[\beta_r]_{r\in\mathcal I_\tau(\boldsymbol\beta)}
}{
\sum_{r\in\mathcal I_\tau(\boldsymbol\beta)}\beta_r 
}, 
 \textrm{~where~} \check{\boldsymbol{\beta}} \in \boldsymbol{\Delta}_{|\mathcal{I}_\tau(\boldsymbol{\beta})|}.
\]
The resulting representation has rank equal to $|\mathcal{I}_\tau(\boldsymbol{\beta})|$. Therefore, pruning serves as an implicit rank-reduction mechanism. Atoms with negligible contributions are removed automatically, reducing the rank without discrete rank-selection step.

Given $\mathcal{I}_{\tau}(\boldsymbol{\beta})$ and
\(\boldsymbol\Psi=[\ket{\boldsymbol\psi_1},\ldots,
\ket{\boldsymbol\psi_q}]\), let us represent the $\tau$-induced \emph{proposed} update by workers via a tuple
\begin{equation}
\label{eq:truncation-map}
\mathsf P_\tau(\boldsymbol\beta,\boldsymbol\Psi)
\coloneqq
\left(
|\mathcal I_\tau(\boldsymbol\beta)|,\,
\check{\boldsymbol{\beta}},\,
[\ket{\boldsymbol\psi_r}]_{r\in\mathcal I_\tau(\boldsymbol\beta)}
\right).
\end{equation}
The master node must accept or reject this 
proposed tuple $P_\tau(\boldsymbol\beta,\boldsymbol\Psi)$ for subsequent updates. Specifically, if a proposed tuple is accepted, master node constructs an intermediate algorithm state
\(\widetilde{\mathcal X}_{t+1}
= P_\tau(\boldsymbol\beta,\boldsymbol\Psi) = 
(\widetilde R_{t+1},
\widetilde{\boldsymbol\alpha}_{t+1},
\widetilde{\boldsymbol\Psi}_{t+1})\), and the algorithm pursues the next iteration using 
\begin{equation}
\label{eq:state-acceptance-rule}
\mathcal X_{t+1}
=
\begin{cases}
\widetilde{\mathcal X}_{t+1}, \quad \textrm{~if~}
&
J(\widetilde{\mathcal X}_{t+1})
\le
J_t-\tau_{\rm acc},
\\[0.35em]
\mathcal X_t,
&
\text{otherwise}.
\end{cases}
\end{equation}
The update rule \eqref{eq:state-acceptance-rule} ensures $\widetilde{\mathcal X}_{t+1}$ is accepted only if it gives a sufficient decrease in the objective. This update requires repeated evaluation of $
J(\mathcal{X}_t)$ and $J(\tilde{\mathcal{X}}_{t+1})$, which depend on the measurement probabilities
$
\boldsymbol{\pi}
=
\boldsymbol{A}(\boldsymbol{\Psi}_t)\boldsymbol{\alpha}_t$,
the likelihood-ratio vector
$
\boldsymbol{c}
=
\hat{\boldsymbol{\pi}}
\oslash
\left(
\boldsymbol{A}(\boldsymbol{\Psi}_t)\boldsymbol{\alpha}_t
+
\epsilon \mathbf{1}
\right)$, and the gradient-related matrix-vector products
$
\boldsymbol{A}_t^\top \boldsymbol{c}
\quad \text{and} \quad
\boldsymbol{G}_t\boldsymbol{v}$. These depend on POVM elements $\{\mathbf{E}_m\}_{m=1}^{M}$, which are too many or too large to store centrally. The distributed oracle computes these quantities without forming and storing the full measurement matrices. 
\vspace{-0.25 cm}
\subsection{Matrix-free oracle calls}
\label{sec:matrix-free-oracle}
We refer to an oracle call as a \emph{matrix-free} evaluation of
measurement-dependent quantities including
\(\boldsymbol a(\boldsymbol\psi_t)\), \(\boldsymbol A_t^\top\boldsymbol c_t\),
and \(\boldsymbol G_t\boldsymbol v\), using
POVM descriptor actions only, rather than dense matrices. To this end, let partition the $M$ measurement outcomes across $k=1,\ldots,K$ workers, where
$\mathcal{M}_k$ denote the subset of measurement indices assigned to the worker $k$. The sets $\{\mathcal{M}_k\}_{k=1}^{K}$ form a disjoint partition of
$\{1,\ldots,M\}$. Let \(M_k:=|\mathcal M_k|\).

Worker \(k\)
stores descriptors for \(\{\boldsymbol E_m:m\in\mathcal M_k\}\), along with
the corresponding slices of \(\widehat{\boldsymbol\pi}\), \(\boldsymbol\pi_t\),
and \(\boldsymbol c_t\). With
\(\boldsymbol A_{t,k}:=[\boldsymbol a_k(\boldsymbol\psi_{t,1}),\ldots,
\boldsymbol a_k(\boldsymbol\psi_{t,R_t})]\), where
\(\boldsymbol a_k(\boldsymbol\psi)\) is the restriction of
\(\boldsymbol a(\boldsymbol\psi)\) to \(\mathcal M_k\), the required oracle
quantities decompose as $
\mathcal L_\varepsilon=\sum_{k=1}^{K}\mathcal L_{\varepsilon,k}$, $
\boldsymbol A_t^\top\boldsymbol c_t
=\sum_{k=1}^{K}\boldsymbol A_{t,k}^{\top}\boldsymbol c_{t,k}$, $
\boldsymbol G_t\boldsymbol v
=\sum_{k=1}^{K}\sum_{m\in\mathcal M_k}
c_{t,m}\boldsymbol E_m\boldsymbol v$.

All measurement-dependent operations are implemented by the following
send-and-return primitives between the master node and the workers. The last
column reports the master-to-worker / worker-to-master communication size per
worker.
\begin{center}
\small
\setlength{\tabcolsep}{2.5pt}
\resizebox{\columnwidth}{!}{%
\begin{tabular}{@{}lllc@{}}
\toprule
\text{primitive} & \text{master sends} & \text{worker \(k\) returns} & \text{communication} \\
\midrule
\(\boldsymbol a(\boldsymbol\psi)\)
& \(\boldsymbol\psi\)
& \(\boldsymbol a_k(\boldsymbol\psi)\)
& \(D/M_k\) \\
\(\boldsymbol A_t^\top\boldsymbol c_t\)
& \(\boldsymbol\Psi_t,\boldsymbol c_{t,k}\)
& \(\boldsymbol A_{t,k}^{\top}\boldsymbol c_{t,k}\)
& \((DR_t+M_k)/R_t\) \\
\(\boldsymbol G_t\boldsymbol v\)
& \(\boldsymbol v,\boldsymbol c_{t,k}\)
& \(\sum_{m\in\mathcal M_k}c_{t,m}\boldsymbol E_m\boldsymbol v\)
& \((D+M_k)/D\) \\
\(\mathcal L_\varepsilon\)
& local prediction slice
& \(\mathcal L_{\varepsilon,k}\)
& \(M_k/1\) \\
\bottomrule
\end{tabular}%
}
\end{center}
The master aggregates these returned quantities and performs the projection,
scoring, truncation, and acceptance steps. The method
stores \(O(DR_t+M)\) variables globally, plus optional \(O(MR_t)\) response
caching, and never materializes dense operators \(\boldsymbol\rho_t\),
\(\{\boldsymbol E_m\}_{m=1}^{M}\), \(\mathbf A_t\), or the gradient
\(\boldsymbol G_t\), substantially reducing the storage requirements.

\vspace{-0.25 cm}
\subsection{Update rules}
\label{subsec:update-rules}

Throughout this subsection, calls to
\(\boldsymbol a(\boldsymbol\psi)\),
\(\boldsymbol A_t^\top\boldsymbol c_t\),
and \(\boldsymbol G_t\boldsymbol v\), are evaluated through the
measurement-sharded oracle described in Section~\ref{sec:matrix-free-oracle}. Thus the
optimization logic is centralized at the master, while the measurement
operations are distributed over workers leveraging the POVM shards.

For \(t\in\mathcal T_{\boldsymbol{\psi}}\), the master selects an active index
\(r_t\in\{1,\ldots,R_t\}\) to update an atom $\boldsymbol{\psi}_{r_t}$. It requests the oracle product
\(\boldsymbol G_t\ket{\boldsymbol\psi_{t,r_t}}\). Using this, it forms the atom direction
\begin{equation}
\nonumber
\label{eq:atom-direction}
\ket{\boldsymbol g_{t,r_t}}
=
\alpha_{t,r_t}
\left(
\boldsymbol I_D
-
\ket{\boldsymbol\psi_{t,r_t}}
\bra{\boldsymbol\psi_{t,r_t}}
\right)
\boldsymbol G_t
\ket{\boldsymbol\psi_{t,r_t}},
\end{equation}
and the normalized proposed atom
\begin{equation}
\label{eq:atom-trial}
\ket{\widetilde{\boldsymbol\psi}_{t+1,r_t}}
=
\frac{
\ket{\boldsymbol\psi_{t,r_t}}
+
\eta_{\boldsymbol{\psi}}\ket{\boldsymbol g_{t,r_t}}
}{
\left\|
\ket{\boldsymbol\psi_{t,r_t}}
+
\eta_{\boldsymbol{\psi}}\ket{\boldsymbol g_{t,r_t}}
\right\|_2
},
\end{equation}
using a learning rate $\eta_{\boldsymbol{\psi}}>0$.
Let \(\widetilde{\boldsymbol\Psi}_{t+1}\) be obtained from
\(\boldsymbol\Psi_t\) by replacing
\(\ket{\boldsymbol\psi_{t,r_t}}\) with
\(\ket{\widetilde{\boldsymbol\psi}_{t+1,r_t}}\), and define
\(\widetilde{\mathcal X}_{t+1}
=
(R_t,\boldsymbol\alpha_t,\widetilde{\boldsymbol\Psi}_{t+1})\).
Workers then compute the sharded response gradients

\begin{equation}
    \label{eq:pi-update}
    [\boldsymbol\pi_{\widetilde{\mathcal X}_{t+1}}
]_{\mathcal M_k}
=
[
\boldsymbol\pi_t
]_{\mathcal M_k}
+
\alpha_{t,r_t}\boldsymbol d_{t,r_t,k}.
\end{equation}

The master aggregates the values in \eqref{eq:pi-update} and and applies
\eqref{eq:state-acceptance-rule}.

For \(t\in\mathcal T_{\boldsymbol{\alpha}}\), the atoms are fixed and the master updates only
their mixture weights. Workers return
\(\boldsymbol h_{t,k}:=\boldsymbol A_{t,k}^{\top}\boldsymbol c_{t,k}\), and
the master forms
\begin{equation}
    \label{eq:h_t}
\boldsymbol h_t
:=
\sum_{k=1}^{K}\boldsymbol h_{t,k}
=
\boldsymbol A_t^\top\boldsymbol c_t.
\end{equation}
Using \eqref{eq:h_t} to form gradient in \eqref{eq:weight-gradient}, the master updates mixture weights via projected gradient descent
\begin{equation}
\label{eq:coefficient-update}
\boldsymbol\beta_t
=
\operatorname{proj}_{\boldsymbol{\Delta_{R_t}}}
\left(
\boldsymbol\alpha_t
+
\eta_{\boldsymbol{\alpha}}
\boldsymbol h_t
\right),
\end{equation}
using a learning rate $\eta_{\boldsymbol{\alpha}}>0$. 
The proposal becomes
\(\widetilde{\mathcal X}_{t+1}
=
\mathsf P_\tau(\boldsymbol\beta_t,\boldsymbol\Psi_t)\), and
\eqref{eq:state-acceptance-rule} keeps the update only if the penalized
objective decreases sufficiently.

For \(t\in\mathcal T_{\boldsymbol{\rho}}\), the algorithm performs a matrix-free spectral
refactorization step. To this aim, let us define an intermediate density operator using gradient descent with a learning rate $\eta_{\boldsymbol{\rho}}$
\begin{equation}
\label{eq:density-proposal}
\boldsymbol S_t
=
\boldsymbol\rho_t
+
\eta_{\boldsymbol{\rho}}\boldsymbol G_t .
\end{equation}
Generally, $\boldsymbol{S}_t$ is not a feasible density operator i.e., \(\boldsymbol S_t\notin\mathcal D\).
Therefore a spectral re refactorization is necessary to ensure regular projection to the feasible set. 

For spectral refactorization, we leverage the 
Ritz pairs in quantum systems. Specifically, we approximate \(\boldsymbol S_t\) using eigenpair
\((\lambda,\ket{\boldsymbol\phi})\) with \(\|\boldsymbol\phi\|_2=1\) and
\(\boldsymbol S_t\ket{\boldsymbol\phi}\approx
\lambda\ket{\boldsymbol\phi}\). In the spectral refactorization step, the
master runs a Lanczos-type routine that requests products
\(\boldsymbol S_t\ket{\boldsymbol v}\), yielding leading Ritz~pairs
\begin{equation}
\label{eq:ritz-pairs}
(\lambda_{t,\ell},\ket{\boldsymbol\phi_{t,\ell}}),
\qquad
\ell=1,\ldots,L_t,
\qquad
L_t\le R_{\max},
\end{equation}
ordered as
\(\lambda_{t,1}\ge\lambda_{t,2}\ge\cdots\ge\lambda_{t,L_t}\), with residuals
\[
\left\|
\boldsymbol S_t\ket{\boldsymbol\phi_{t,\ell}}
-
\lambda_{t,\ell}\ket{\boldsymbol\phi_{t,\ell}}
\right\|_2
\le
\epsilon_{\rm Ritz},
\]
where $L_t$ is total number of Ritz pairs.
To form the eigenpairs $\{(\lambda_{t,\ell},\ket{\boldsymbol\phi_{t,\ell}})\}_l$, the workers must return the local
pieces of \(\boldsymbol G_t\ket{\boldsymbol v}\) for any Lanczos trial vector \(\ket{\boldsymbol v}\). Having received these pieces, the master forms
\begin{equation}
\label{eq:density-product}
\boldsymbol S_t\ket{\boldsymbol v}
=
\sum_{r=1}^{R_t}
\alpha_{t,r}
\ket{\boldsymbol\psi_{t,r}}
\braket{\boldsymbol\psi_{t,r}|\boldsymbol v}
+
\eta_{\boldsymbol{\rho}}
\sum_{k=1}^{K}
\sum_{m\in\mathcal M_k}
c_{t,m}\boldsymbol E_m\ket{\boldsymbol v}.
\end{equation}
The first term uses only stored atoms, inner products, and scalar--vector
accumulations; the second term uses only descriptor-level POVM actions.
Thus the spectral step does not form dense density, POVM, or gradient
matrices.

For each $s = 1,\ldots, L_t$ master node finds
\[
\boldsymbol\Phi_{t,s}
=
[
\ket{\boldsymbol\phi_{t,1}},\ldots,\ket{\boldsymbol\phi_{t,s}}
],
\qquad
\boldsymbol\lambda_{t,s}
=
[
\lambda_{t,1},\ldots,\lambda_{t,s}
]^\top .
\]
Although \(\boldsymbol\Phi_{t,s}\) consists of normalized candidate atoms,
\(\boldsymbol\lambda_{t,s}\) is generally infeasible since
\(\boldsymbol\lambda_{t,s}\notin\Delta_s\). Hence the feasible spectral
candidates are formed by projection $\boldsymbol\beta_{t,s} =
\operatorname{proj}_{{\boldsymbol{\Delta}_s}}(\boldsymbol\lambda_{t,s})$, and $S$ candidate algorithm states are created
\begin{align}
\label{eq:spectral-candidates}
\widetilde{\mathcal X}_{t+1}^{(s)} = \mathsf P_\tau(\boldsymbol\beta_{t,s},\boldsymbol\Phi_{t,s}),
\qquad
s=1,\ldots,L_t .
\end{align}
The master node scores these candidates. Specifically,
{for a given $\mathcal X=(R,\boldsymbol\alpha,\boldsymbol\Psi)$ it scores it using rank-penalized low-rank surrogate objective
\begin{equation}
\label{eq:proximal-surrogate}
Q_t(\mathcal X)
\coloneqq
\frac{1}{2\eta_{\boldsymbol{\rho}}}
\left\|
\boldsymbol\rho_{\mathcal X}
-
\boldsymbol S_t
\right\|_F^2
+
\mu R,
\end{equation}
defined over feasible atomic representations with \(R\le R_{\max}\).} Using $Q_t(\mathcal X)$, it solves for
\begin{equation}
\label{eq:spectral-selection}
s_t^*
\in 
\underset{1\le s\le L_t}{\arg\min}
\, Q_t \bigl(\widetilde{\mathcal X}_{t+1}^{(s)}\bigr),
\end{equation}
to form a candidate $\widetilde{\mathcal X}_{t+1}
=
\widetilde{\mathcal X}_{t+1}^{(s_t^\star)}$. The acceptance test \eqref{eq:state-acceptance-rule} is finally applied to
\(\widetilde{\mathcal X}_{t+1}\) to find \({\mathcal X}_{t+1}\). This final step safeguards updated rank, and atoms. It is especially important when the Ritz pairs are approximate.

\vspace{-0.25 cm}
\subsection{Algorithm analysis}
\label{subsec:algorithm-analysis}

The next theorem characterizes the exact spectral refactorization step
induced by \eqref{eq:proximal-surrogate}--\eqref{eq:spectral-selection}. Here
``exact'' means the complete-information spectral step in which the leading
spectrum of \(\boldsymbol S_t\) is available exactly, so the spectral proposal
is not affected by Ritz approximation or numerical errors. The theorem shows
that this proposal is the global minimizer of a rank-penalized proximal
problem and yields descent of the penalized objective. The proof is deferred
to Appendix~\ref{app:algorithm-analysis-proofs}.

\begin{theorem}[Exact spectral proximal characterization and descent]
\label{thm:proximal-spectral}
Fix a spectral update iteration \(t\in\mathcal T_{\boldsymbol\rho}\). Suppose that in
\eqref{eq:ritz-pairs} the Ritz pairs are exact eigenpairs of
\(\boldsymbol S_t\), with \(L_t=R_{\max}\), and that \(\tau=0\). For each
\(s=1,\ldots,R_{\max}\), let \(\widetilde{\mathcal X}_{t+1}^{(s)}\) be the
candidate in \eqref{eq:spectral-candidates}, and let
\(\widetilde{\mathcal X}_{t+1}^{\rm prox}\) denote any proposal returned by
\eqref{eq:spectral-selection}. All minimizations below are over feasible
atomic representations \(\mathcal X=(R,\boldsymbol\alpha,\boldsymbol\Psi)\) as
in \eqref{eq:rank-adaptive-atomic-qst}. Then the following statements hold.

\begin{enumerate}[(i)]
\item For every \(s=1,\ldots,R_{\max}\),
\begin{equation}
\label{eq:rank-capped-proximal}
\widetilde{\mathcal X}_{t+1}^{(s)}
\in
\operatorname*{arg\,min}_{\mathcal X:\, R\le s}
\left\|
\boldsymbol\rho_{\mathcal X}
-
\boldsymbol S_t
\right\|_F^2 .
\end{equation}

\item The proposal \(\widetilde{\mathcal X}_{t+1}^{\rm prox}\) solves
\begin{equation}
\label{eq:exact-proximal-problem}
\widetilde{\mathcal X}_{t+1}^{\rm prox}
\in
\arg\min_{\mathcal X}
Q_t(\mathcal X).
\end{equation}

\item Assume in addition that
\(\boldsymbol\rho\mapsto\mathcal L_\varepsilon(\boldsymbol\pi(\boldsymbol\rho))\)
has \(L_\varepsilon\)-Lipschitz continuous gradient on \(\mathcal D\), and
that \(0<\eta_{\boldsymbol{\rho}}<1/ L_\varepsilon\). Then
\begin{equation}
\label{eq:proximal-descent}
J\!\left(\widetilde{\mathcal X}_{t+1}^{\rm prox}\right)
\le
J_t
-
\left(
\frac{1}{2\eta_{{\boldsymbol{\rho}}}}
-
\frac{L_\varepsilon}{2}
\right)
\left\|
\boldsymbol\rho_{\widetilde{\mathcal X}_{t+1}^{\rm prox}}
-
\boldsymbol\rho_t
\right\|_F^2 .
\end{equation}
In particular,
\(J(\widetilde{\mathcal X}_{t+1}^{\rm prox})\le J_t\).
\end{enumerate}
\end{theorem}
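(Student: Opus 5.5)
Part (i) reduces to a spectral projection problem. The set of matrices \(\boldsymbol\rho_{\mathcal X}\) with \(R\le s\) is exactly the set of density operators of rank at most \(s\), i.e. \(\mathcal D_s:=\{\boldsymbol\rho\in\mathcal D:\operatorname{rank}\boldsymbol\rho\le s\}\). This holds because every such \(\boldsymbol\rho\) has a spectral decomposition \(\boldsymbol\Psi\operatorname{diag}(\boldsymbol\alpha)\boldsymbol\Psi^*\) with unit-norm columns, where zero weights are allowed since \(\tau=0\).

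Note that \(\boldsymbol S_t\) is Hermitian, because \(\boldsymbol\rho_t\) and \(\boldsymbol G_t=\sum_m c_{t,m}\boldsymbol E_m\) with \(c_{t,m}\ge0\) are. Write \(\boldsymbol S_t=\sum_{j=1}^D\lambda_j\ket{\boldsymbol\phi_j}\bra{\boldsymbol\phi_j}\) with eigenvalues in nonincreasing order.

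For any Hermitian \(\boldsymbol\rho\) with eigenvalues \(\boldsymbol\mu\), the von Neumann (Hoffman--Wielandt) trace inequality gives \(\operatorname{tr}(\boldsymbol\rho\boldsymbol S_t)\le\sum_j\mu_j^\downarrow\lambda_j^\downarrow\). Consequently
\[
\|\boldsymbol\rho-\boldsymbol S_t\|_F^2\ge\|\boldsymbol\mu^\downarrow-\boldsymbol\lambda^\downarrow\|_2^2,
\]
with equality when \(\boldsymbol\rho\) is diagonal in the eigenbasis of \(\boldsymbol S_t\) with matched ordering. Over \(\mathcal D_s\), the vector \(\boldsymbol\mu^\downarrow\) ranges over simplex vectors supported on the first \(s\) entries. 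The problem therefore reduces to minimizing \(\|\boldsymbol\beta-\boldsymbol\lambda_{t,s}\|_2^2+\sum_{j>s}\lambda_j^2\) over \(\boldsymbol\beta\in\boldsymbol\Delta_s\). Its minimizer is \(\operatorname{proj}_{\boldsymbol\Delta_s}(\boldsymbol\lambda_{t,s})=\boldsymbol\beta_{t,s}\).

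One subtlety is that a sorted \(\boldsymbol\mu^\downarrow\) is nonincreasing, so the reduction must restrict to ordered \(\boldsymbol\beta\). The simplex projection of a nonincreasing vector is itself nonincreasing, since it is a common shift followed by clipping at zero. Hence the unconstrained projection is attained within the ordered set, and the candidate \(\mathsf P_0(\boldsymbol\beta_{t,s},\boldsymbol\Phi_{t,s})\) attains the lower bound. Since \(\mathsf P_0\) only deletes zero-weight atoms, \(\boldsymbol\rho\) is unchanged and the rank can only decrease, so the candidate stays feasible for \(R\le s\).

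For (ii), split \(\min_{\mathcal X}Q_t\) by the value of \(R\), using \(1\le R\le R_{\max}\). For fixed \(s\), every \(\mathcal X\) with \(R\le s\) satisfies \(Q_t(\mathcal X)\ge \frac1{2\eta_\rho}\min_{R\le s}\|\cdot\|_F^2+\mu R\). A minimizer of \(Q_t\) with atom count \(R=s\) is dominated by \(\widetilde{\mathcal X}^{(s)}_{t+1}\), whose atom count is at most \(s\) because pruning of zero weights can only lower the penalty. Thus \(\min_{\mathcal X}Q_t=\min_s Q_t(\widetilde{\mathcal X}^{(s)}_{t+1})\), and the selection rule \eqref{eq:spectral-selection} attains it.

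For (iii), the plan is the standard proximal-gradient descent lemma. Let \(f(\boldsymbol\rho)=\mathcal L_\varepsilon(\boldsymbol\pi(\boldsymbol\rho))\), so that \(\nabla f(\boldsymbol\rho_t)=-\boldsymbol G_t\). Expanding the square shows that, for feasible \(\mathcal X\),
\[
Q_t(\mathcal X)=\tfrac{1}{2\eta_\rho}\|\boldsymbol\rho_{\mathcal X}-\boldsymbol\rho_t\|_F^2+\langle\nabla f(\boldsymbol\rho_t),\boldsymbol\rho_{\mathcal X}-\boldsymbol\rho_t\rangle+\mu R+\text{const},
\]
where the constant is independent of \(\mathcal X\). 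Optimality from (ii), compared against \(\mathcal X_t\) itself (which is feasible with \(R_t\le R_{\max}\)), gives
\[
\langle\nabla f(\boldsymbol\rho_t),\Delta\rangle+\tfrac1{2\eta_\rho}\|\Delta\|_F^2+\mu\widetilde R\le\mu R_t,
\qquad \Delta:=\boldsymbol\rho_{\widetilde{\mathcal X}^{\rm prox}_{t+1}}-\boldsymbol\rho_t .
\]
The \(L_\varepsilon\)-smoothness upper bound \(f(\boldsymbol\rho_t+\Delta)\le f(\boldsymbol\rho_t)+\langle\nabla f,\Delta\rangle+\frac{L_\varepsilon}2\|\Delta\|_F^2\) applies because both endpoints lie in the convex set \(\mathcal D\). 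Adding the two inequalities yields \eqref{eq:proximal-descent}, and the coefficient is positive because \(\eta_\rho<1/L_\varepsilon\).

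The main obstacle is the rigorous version of (i): the trace inequality in the complex Hermitian case, and the ordering issue in the reduction, both of which must be handled carefully. Ties among eigenvalues are harmless since only membership in the argmin is claimed.
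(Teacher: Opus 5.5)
Your proposal is correct and follows essentially the same route as the paper's proof. Part (i) reduces, via the von Neumann trace inequality, to projecting the leading eigenvalues onto the simplex. Part (ii) uses the same candidate-domination argument. Part (iii) compares $Q_t$ at the proposal against $\mathcal X_t$ and adds the $L_\varepsilon$-smoothness upper bound. Your extra remarks (Hermitian-ness of $\boldsymbol S_t$, identification of $\{\boldsymbol\rho_{\mathcal X}:R\le s\}$ with rank-$\le s$ densities, monotonicity of the simplex projection) make explicit points the paper leaves implicit. The ordering concern is actually moot, though. The lower bound only needs the leading $s$ eigenvalues of $\boldsymbol\rho$ to lie in $\boldsymbol\Delta_s$, and the candidate is diagonal in the eigenbasis of $\boldsymbol S_t$, so it attains $\|\boldsymbol\beta_{t,s}-\boldsymbol\lambda_{t,s}\|_2^2+\sum_{j>s}\lambda_j^2$ by direct computation.
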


Theorem~\ref{thm:proximal-spectral} suggests that the spectral refactorization
step is not an ad-hoc truncation. Having the exact eigenpairs, the candidate family
\eqref{eq:spectral-candidates} contains the exact rank-penalized proximal
update, and \eqref{eq:spectral-selection} recovers it. In the implemented
algorithm, exact eigenpairs are replaced by matrix-free Ritz pairs computed
through \eqref{eq:density-product}, while
\eqref{eq:state-acceptance-rule} remains a practical safeguard under these
approximations.

\begin{algorithm}[t]
\footnotesize
\caption{Master--worker matrix-free rank-adaptive QST}
\label{alg:matrix-free-qst}
\begin{algorithmic}[1]
\REQUIRE \(\widehat{\boldsymbol\pi}\), POVM shards
\(\{\mathcal M_k\}_{k=1}^{K}\), initial
\(\mathcal X_0=(R_0,\boldsymbol\alpha_0,\boldsymbol\Psi_0)\),
sets \(\mathcal T_{\boldsymbol\psi},\mathcal T_{\boldsymbol{\alpha}},\mathcal T_{\boldsymbol{\rho}}\),
and parameters \(\eta_\psi,\eta_\alpha,\eta_\rho,\mu,\varepsilon,\tau,
\tau_{\rm acc},R_{\max},T\)
\STATE Master initializes
\(\boldsymbol\pi_0=\boldsymbol A(\boldsymbol\Psi_0) \, \boldsymbol\alpha_0\),
\(J_0=J(\mathcal X_0)\)
\FOR{\(t=0,\ldots,T-1\)}
    \STATE Master forms
    \(\boldsymbol c_t\) \eqref{eq:likelihood-ratio-vector}, and sends slices
    \(\boldsymbol c_{t,k}\) back to workers 
    \IF{\(t\in\mathcal T_{\boldsymbol\psi}\)}
        \STATE Master selects \(r_t\); workers return pieces
        \(\boldsymbol G_t\ket{\boldsymbol\psi_{t,r_t}}\)
        \STATE Master forms \(\widetilde{\mathcal X}_{t+1}\)
        \eqref{eq:atom-direction}--\eqref{eq:atom-trial}; workers return $\boldsymbol{d}_{r_t,k}$
    \ELSIF{\(t\in\mathcal T_{\boldsymbol{\alpha}}\)}
        \STATE Workers return
        \(\boldsymbol A_{t,k}^{\top}\boldsymbol c_{t,k}\); master forms
        \(\boldsymbol A_t^\top\boldsymbol c_t\), updates
        \(\boldsymbol\beta_t\) by \eqref{eq:coefficient-update} and sets
        \(\widetilde{\mathcal X}_{t+1}
        =\mathsf P_\tau(\boldsymbol\beta_t,\boldsymbol\Psi_t)\)
    \ELSIF{\(t\in\mathcal T_{\boldsymbol{\rho}}\)}
        \STATE Master runs Lanczos; workers provide the oracle products in
        \eqref{eq:density-product} Master forms Ritz candidates by
        \eqref{eq:spectral-candidates} and selects
        \(\widetilde{\mathcal X}_{t+1}\) by
        \eqref{eq:spectral-selection}
    \ENDIF
    \STATE Workers return values in \eqref{eq:pi-update} \\ Master evaluates
    \(J(\widetilde{\mathcal X}_{t+1})\) and applies
    \eqref{eq:state-acceptance-rule} 
\ENDFOR
\STATE \textbf{return} \((R_T,\boldsymbol\alpha_T,\boldsymbol\Psi_T)\)
\end{algorithmic}
\end{algorithm}

% \vspace{-0.35 cm}
\section{Numerical tests}
\label{sec:simulations}

We evaluate Algorithm~\ref{alg:matrix-free-qst} on \(N\)-qubit
low-rank QST instances. The simulation parameters are the
number of qubits \(N\), the true rank \(R_{\star}\), the number of measured
Pauli strings \(B\), the number of shots per Pauli string \(N_s\), and the rank budget \(R_{\max}\). The derived quantities are
\(D=2^N\), \(M=BD\), and total shots \(N_{\rm shot}=BN_s\). The ground-truth state follows the atomic model
\[
\boldsymbol\rho_{\star}
=
\sum_{r=1}^{R_{\star}}
\alpha_{\star,r}
\ket{\boldsymbol\psi_{\star,r}}
\bra{\boldsymbol\psi_{\star,r}} .
\]
The weights \(\boldsymbol\alpha_{\star}\in\Delta_{R_{\star}}\) are drawn from
a Dirichlet distribution with unit concentration parameters and sorted in
nonincreasing order. The atoms are generated independently as
\[
\boldsymbol\psi_{\star,r}
=
\boldsymbol\xi_r/\|\boldsymbol\xi_r\|_2, 
\quad \textrm{~where~}
\boldsymbol\xi_r\sim\mathcal{CN}(\boldsymbol 0,\boldsymbol I_D),
\,
r=1,\ldots,R_{\star}.
\]

The POVM design consists of \(B\) random Pauli strings. A Pauli string
specifies one single-qubit measurement, \(X\), \(Y\), or \(Z\), on each of the
\(N\) qubits. Thus, the \(b\)-th string is
\[
\boldsymbol s_b
=
(s_{b,1},\ldots,s_{b,N})
\in
\{X,Y,Z\}^{N},
\qquad
b=1,\ldots,B
\]
drawn uniformly at random. Each string defines a product measurement basis
with \(D=2^N\) possible outcomes indexed by
\(\boldsymbol u\in\{0,1\}^{N}\). Let \(\boldsymbol V_b\) denote the product
basis rotation associated with \(\boldsymbol s_b\), i.e.,
\(\boldsymbol V_b=\boldsymbol V_{s_{b,1}}\otimes\cdots\otimes
\boldsymbol V_{s_{b,N}}\), where \(\boldsymbol V_X\), \(\boldsymbol V_Y\),
and \(\boldsymbol V_Z\) rotate the corresponding Pauli eigenbasis to the
computational basis.

To express these measurements using the POVM notation of
Section~\ref{sec:formulation}, we identify a global outcome with the pair
\((b,\boldsymbol u)\) and define
\[
\boldsymbol E_{b,\boldsymbol u}
=
\frac{1}{B}
\boldsymbol V_b^{*}
\ket{\boldsymbol u}\bra{\boldsymbol u}
\boldsymbol V_b .
\]
The factor \(1/B\) combines the \(B\) separately chosen Pauli strings into one
POVM with $
\sum_{b=1}^{B}
\sum_{\boldsymbol u\in\{0,1\}^{N}}
\boldsymbol E_{b,\boldsymbol u}
=
\boldsymbol I_D.$

Dense matrices \(\boldsymbol E_{b,\boldsymbol u}\) are used only for
notation; in the implementation each measurement is represented by its Pauli
string \(\boldsymbol s_b\).

For data generation, the physical outcome probabilities are conditional on
the chosen Pauli string. For string \(b\), we have
\[
\mathbb P(\boldsymbol u\mid b,\boldsymbol\rho_{\star})
=
\bra{\boldsymbol u}
\boldsymbol V_b
\boldsymbol\rho_{\star}
\boldsymbol V_b^{*}
\ket{\boldsymbol u}
=
B\operatorname{tr}
\left(
\boldsymbol E_{b,\boldsymbol u}\boldsymbol\rho_{\star}
\right).
\]
We draw \(N_s\) shots for each Pauli string. If
\(N_{b,\boldsymbol u}\) is the count of outcome \(\boldsymbol u\) under
string \(b\), then the empirical frequency supplied to the algorithm is
$
\widehat\pi_{b,\boldsymbol u}
=
\frac{N_{b,\boldsymbol u}}{BN_s},$ and
$\widehat{\boldsymbol\pi}\in\Delta_M$. Thus, the data consist of \(B\) empirical histograms, each with \(D\) outcomes,
stacked into the single vector \(\widehat{\boldsymbol\pi}\) used in
\eqref{eq:qst-loss}.

Table~\ref{tab:sim-parameters} summarizes the default simulation parameters.
The learning rate is obtained through grid search over \(\mathcal G_{\eta}\) using a simple backtracking; that is, during each slot, per
proposed atom, coefficient, or refactorization update, candidates from
$\mathcal{G}_\eta$ are tested from largest to smallest, and the largest accepted is used.

\begin{table}[t]
\centering
\caption{Default simulation parameters.}
\label{tab:sim-parameters}
\footnotesize
\setlength{\tabcolsep}{3pt}
\begin{tabularx}{\columnwidth}{@{}lX@{}}
\toprule
Quantity & Value \\
\midrule
Qubits & \(N\), with \(D=2^N\) \\
True rank & \(R_{\star}=4\) \\
Pauli strings & \(B=16N\), unless varied \\
Shots per string & \(N_s=512\) \\
Derived size & \(M=BD,\quad N_{\rm shot}=BN_s\) \\
Rank budget / init. &
\(R_{\max}=8,\quad R_0=R_{\max},\quad
\boldsymbol\alpha_0=R_{\max}^{-1}\boldsymbol 1\) \\
Objective params. & \(\varepsilon=10^{-8},\quad \mu=10^{-3}\) \\
Acceptance / pruning & \(\tau_{\rm acc}=10^{-10},\quad \tau=10^{-3}\) \\
Schedules & \(T_{\alpha}=5,\quad T_{\rm ref}=50\) \\
Ritz step & \(L_t=R_{\max},\quad N_{\rm Lanczos}\le30\) \\
Step grid & \(\mathcal G_{\eta}=\{2^{-j}\}_{j=0}^{10}\) \\
Stopping & \(\delta_{20}<10^{-4}\), or \(N_{\rm oracle}=10^5\), or
\(m_{\rm peak}=64\) GiB \\
Trials & 20 independent instances; medians reported \\
\bottomrule
\end{tabularx}
\vspace{0.25em}
\begin{flushleft}
\footnotesize
Here \(\delta_{20}\) is the relative objective decrease over the last 20
accepted updates, \(N_{\rm oracle}\) is the serial-equivalent oracle-call
count, and \(m_{\rm peak}\) is peak memory.
\end{flushleft}
\end{table}

The compared methods are listed in Table~\ref{tab:sim-baselines}. All methods
use the same Pauli strings, empirical vector \(\widehat{\boldsymbol\pi}\),
stabilized likelihood \eqref{eq:qst-loss}, stopping criteria, and step grid
\(\mathcal G_{\eta}\). Fixed-rank baselines are given the true rank
\(R_{\star}\), while the proposed method is given only \(R_{\max}\). Dense
matrices are formed only when computing the final normalized  square error
$
e_{\rho}
:=
\frac{
\|\widehat{\boldsymbol\rho}-\boldsymbol\rho_{\star}\|_F
}{
\|\boldsymbol\rho_{\star}\|_F
}$.

\begin{table}[t]
\centering
\caption{Baseline markers used in the simulation figures.}
\label{tab:sim-baselines}
\footnotesize
\setlength{\tabcolsep}{2.5pt}
\begin{tabularx}{\columnwidth}{@{}lX@{}}
\toprule
Marker / method & Implementation \\
\midrule
\qstDenseMark{} \textsc{Dense PGD/MLE}
& Full-rank projected-gradient MLE over \(\mathcal D\)
\cite{Bolduc2017PGDQST}.
\\
\qstLRMark{} \textsc{LR-PGD}
& Rank-\(R_{\star}\) projected-gradient QST
\cite{Gross2010CSQST,Flammia2012CSQST,Bolduc2017PGDQST}.
\\
\qstRGDMark{} \textsc{Factored RGD}
& Rank-\(R_{\star}\) factored/Riemannian QST
\cite{Kyrillidis2018NonconvexQST,Hsu2024RGDQST,Wang2024EfficientFGD}.
\\
\qstFWMark{} \textsc{Atomic FW}
& Rank-one Frank--Wolfe with corrective reweighting
\cite{Hazan2008SparseSDP,Jaggi2013FrankWolfe}.
\\
\qstPropMark{} \textsc{Our Proposed}
& Rank-adaptive matrix-free atomic QST.
\\
\bottomrule
\end{tabularx}
\end{table}

\vspace{-0.15 cm}
\subsection{Results}
\label{subsec:simulation_results}

\begin{figure}[t]
\centering

\subfloat[Statistical recovery\label{fig:qst_comparison_a}]{%
  \includegraphics[width=0.48\columnwidth]{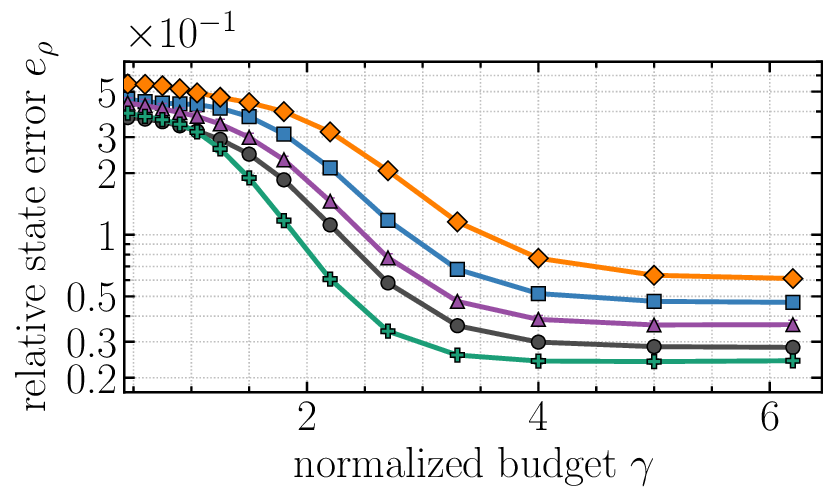}
}
\hfill
\subfloat[Accuracy--runtime tradeoff\label{fig:qst_comparison_b}]{%
  \includegraphics[width=0.48\columnwidth]{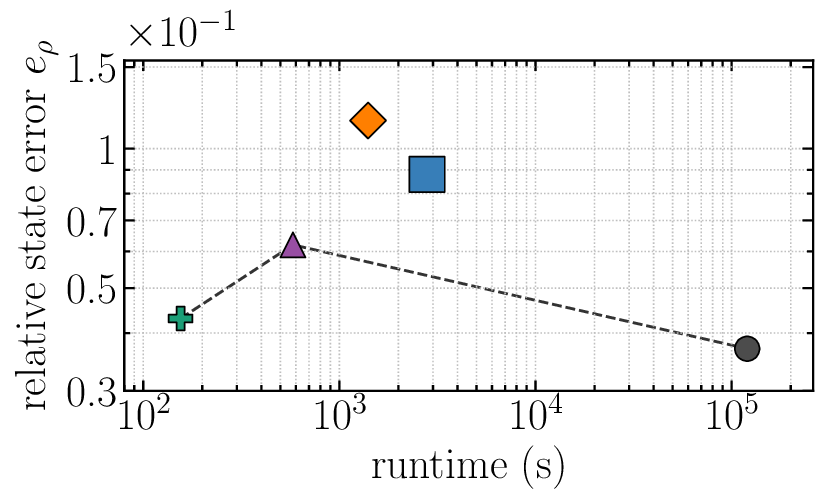}
}

\vspace{0.5em}
\subfloat[Runtime scaling\label{fig:qst_comparison_c}]{%
  \includegraphics[width=0.48\columnwidth]{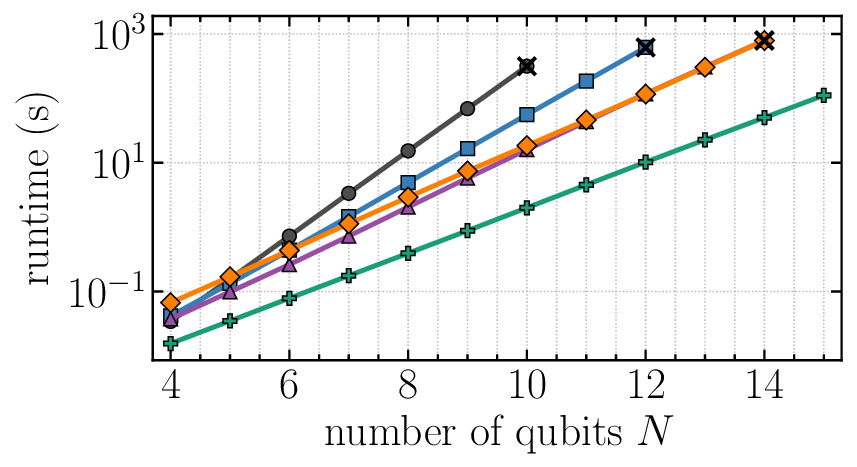}
}
\hfill
\subfloat[Memory scaling\label{fig:qst_comparison_d}]{%
  \includegraphics[width=0.48\columnwidth]{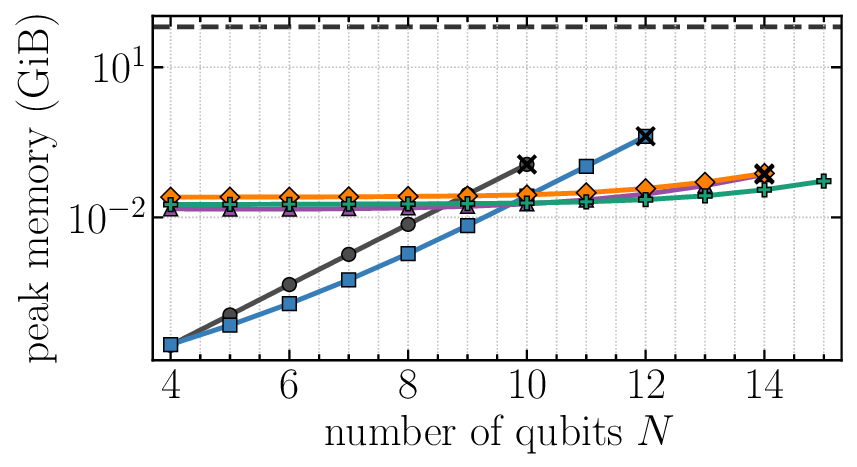}
}

\caption{
Comparison with representative QST solvers. Method markers follow
Table~\ref{tab:sim-baselines}. Fig.~\ref{fig:qst_comparison_a} uses
\(N=12\) and \(R_{\star}=4\), while varying \(\gamma\).
Fig.~\ref{fig:qst_comparison_b} reports accuracy versus runtime at \(N=14\)
and \(\gamma=2\), with marker area proportional to peak memory.
Figs.~\ref{fig:qst_comparison_c}--\ref{fig:qst_comparison_d} show runtime
and memory scaling with \(N\) at \(\gamma=2\). The marker \qstTermMark{}
indicates termination by stopping rule or resource cap, and \qstCapLine{}
marks the 64 GiB memory cap.
}
% \vspace{-0.25 cm}
\label{fig:qst_comparison}
\end{figure}

Fig.~\ref{fig:qst_comparison} benchmarks the performance of our proposed method against alternatives in terms of estimation error, memory-efficiency, and scalability. In Fig.~\ref{fig:qst_comparison_a}, increasing normalized budget 
\(\gamma\) (where $B \propto \gamma \times N$) improves estimation error by increasing both the number of Pauli settings
and the total shot count. At small budgets, sampling error dominates and the
methods are close, whereas at larger budgets the benefit of low-rank modeling
becomes pronounced. Dense PGD/MLE estimates an unrestricted \(D\times D\)
density matrix, while the low-rank methods search over smaller models. Our
proposed method additionally penalizes the rank order, helping remove
unnecessary atoms as the measurements become more informative.

Fig.~\ref{fig:qst_comparison_b} shows the accuracy--runtime--memory tradeoff. The larger the marker, the bigger the memory requirement of the algorithm. 
Dense PGD/MLE method requires full spectral projections, LR-PGD reduces the retained
rank but still uses matrix-level projected steps, factored RGD avoids
full-rank variables but requires the prescribed factor rank \(R_{\star}\), and
atomic-FW is matrix-free but changes rank through via greedy updates. Our novel approach
combines atom updates
\eqref{eq:atom-direction}--\eqref{eq:atom-trial}, simplex reweighting
\eqref{eq:coefficient-update}, and spectral refactorization
\eqref{eq:density-proposal}--\eqref{eq:spectral-selection}, yielding a
favorable accuracy--runtime tradeoff without dense projections --- as it is closer to the left bottom corner of the plot. In addition, it has a smaller memory requirement, signified by the size of its marker. 

Figs.~\ref{fig:qst_comparison_c} and \ref{fig:qst_comparison_d} illustrate the
runtime--memory advantage of matrix-free oracle evaluations. Dense methods
require \(O(D^2)\) memory and dense spectral operations, which become
increasingly limiting as \(N\) grows. Our novel  method stores active atoms,
coefficients, probability vectors, and Lanczos work vectors, with working
memory \(O(DR_t+M)\) up to optional response storage. Pauli effects and
likelihood-gradient actions are evaluated through descriptors, so dense
density, POVM, and gradient matrices are never formed.

For the algorithmic ablation, we compare four variants,
$
\mathcal V
:=
\{
v_{\rm atom},
v_{\rm coeff},
v_{\rm kr},
v_{\rm full}
\}$.
All variants use the same data, stabilized likelihood, initialization, step
grid, and acceptance rule; they differ only in which candidate moves are
enabled. The variant \(v_{\rm atom}\), denoted \textsc{Atom only}, enables
only the local atom trial
\eqref{eq:atom-direction}--\eqref{eq:atom-trial}; the coefficients and active
support size are fixed. The variant \(v_{\rm coeff}\), denoted
\textsc{Atom + coeff.}, also enables the simplex coefficient update
\eqref{eq:coefficient-update}; the current atoms can therefore be moved and
reweighted, but no spectral refactorization is used. The variant
\(v_{\rm kr}\), denoted \textsc{Known-rank refactor.}, further enables the
spectral refactorization proposal
\eqref{eq:density-proposal}--\eqref{eq:spectral-selection}, but restricts the
refactorized support to the true rank \(R_{\star}\). The variant
\(v_{\rm full}\), denoted \textsc{Full adaptive}, is the complete proposed
method: it uses spectral refactorization and pruning, and the active rank
\(R_t\) is selected by the rank-penalized objective rather than supplied as
input. For variant \(v\in\mathcal V\), let \(J_{v,i}^{\rm rec}\) be the recorded
penalized objective value after its \(i\)th accepted update, with
\(i=0,\ldots,I_v\). We plot
\[
J_{\rm best}
:=
\min_{v\in\mathcal V}
\min_{0\le i\le I_v}
J_{v,i}^{\rm rec},
\qquad
\Delta J_v(i)
:=
J_{v,i}^{\rm rec}-J_{\rm best}.
\]
Thus, \(\Delta J_v(i)\) is the objective gap of variant \(v\) relative to the
best objective value attained by any displayed variant on the same instance.

\begin{figure*}[t]
\centering
\subfloat[Objective gap\label{fig:alg_ablation_obj}]{%
  \includegraphics[width=0.305\textwidth]{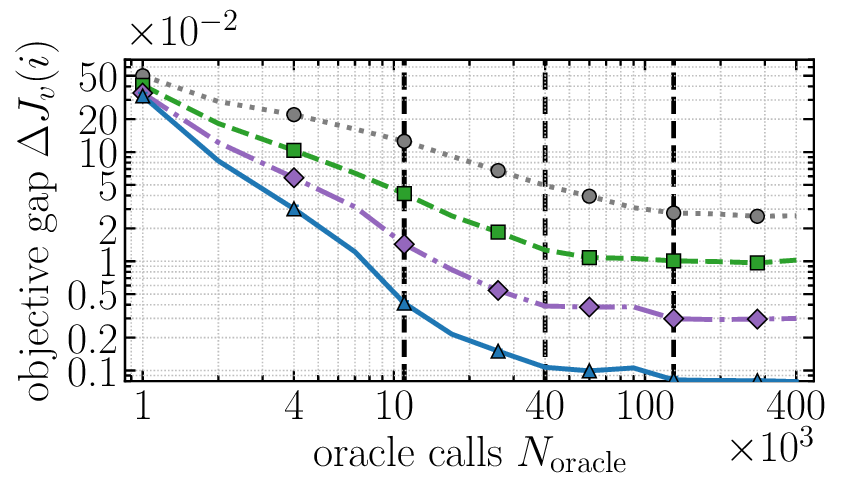}
}
\hspace{0.018\textwidth}
\subfloat[Recovery error\label{fig:alg_ablation_err}]{%
  \includegraphics[width=0.305\textwidth]{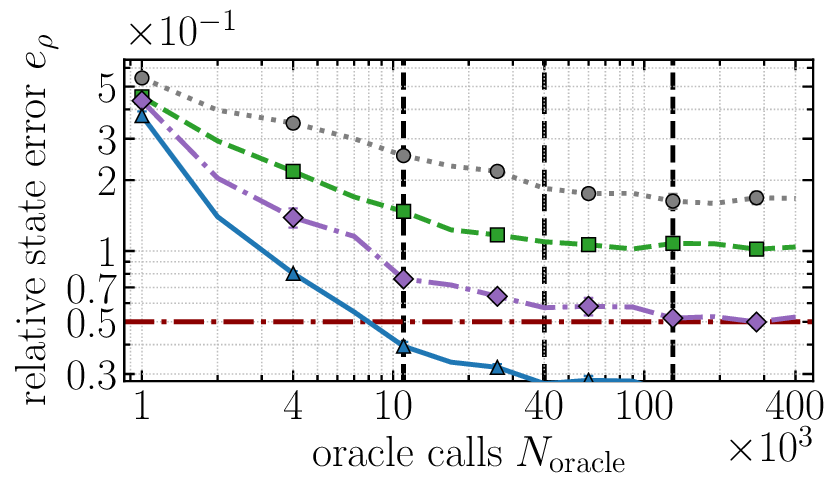}
}
\hspace{0.018\textwidth}
\subfloat[Active rank\label{fig:alg_ablation_rank}]{%
  \includegraphics[width=0.305\textwidth]{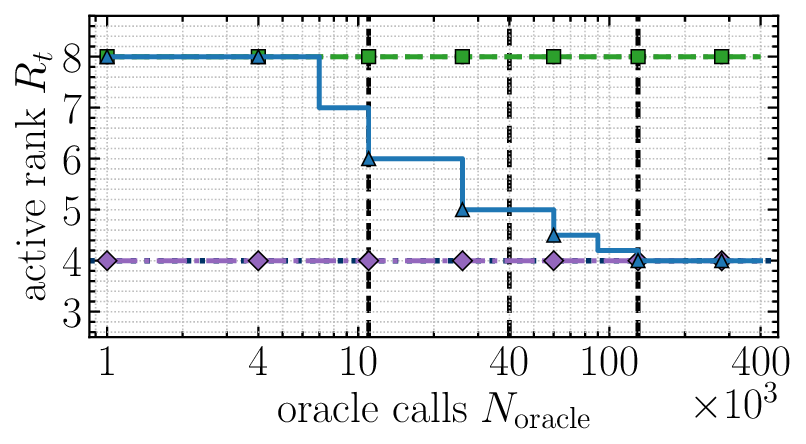}
}
\caption{
Algorithmic ablation and rank adaptation for the variants
\(v_{\rm atom}\), \(v_{\rm coeff}\), \(v_{\rm kr}\), and \(v_{\rm full}\). Markers indicate
\algAtomMark{} \textsc{Atom only},
\algCoeffMark{} \textsc{Atom + coeff.},
\algOracleMark{} \textsc{Known-rank refactor.},
and \algFullMark{} \textsc{Full adaptive}.
The horizontal axis counts serial-equivalent matrix-free oracle calls.
Fig.~\ref{fig:alg_ablation_obj} shows \(\Delta J_v(i)\);
Fig.~\ref{fig:alg_ablation_err} shows \(e_{\rho}\), with \algErrorLine{}
denoting the 5\% level; and Fig.~\ref{fig:alg_ablation_rank} shows \(R_t\),
with \algKstarLine{} denoting \(R_{\star}=4\). The dashed line
\algRefactorLine{} marks spectral refactorization epochs.
}
% \vspace{-0.5 cm}
\label{fig:alg_ablation_rank_adaptation}
\end{figure*}
Fig.~\ref{fig:alg_ablation_rank_adaptation} analyzes the contribution of each
candidate move through objective decrease, recovery accuracy, and active-rank
evolution. In Fig.~\ref{fig:alg_ablation_obj}, local atom updates reduce the
objective initially but can stall when the current rank is poorly placed,
while adding the simplex update improves reweighting of the existing atoms.
Spectral refactorization produces judicious rank updates, and the
known-rank variant shows the behavior when 
\(R_{\star}\) is
provided. Fig.~\ref{fig:alg_ablation_err} shows that these rank updates
also improve recovery error. Finally, Fig.~\ref{fig:alg_ablation_rank} shows
that the full adaptive method can successfully unravel true \(R_{\star}\), the truncation map
\eqref{eq:truncation-map} removes numerically inactive coefficients, and
the rank penalty discourages retaining extra Ritz atoms unless they improve
the penalized objective.

%%%%%%%%%%%%%%%%%%%%%%%%%%%%%%%%%%%%%%%%%%%%%%%%%%%%%%%%%%%%%%%%%%%%%%%%%%%%%%%%%%%%%%%%%%%%%%%%%%%%%%%%%%%%%%%
\section{Concluding Summary}
\label{sec:conclusion}

This paper developed a rank-adaptive matrix-free method for low-rank QST. Representing the density operator as a convex combination of rank-one atoms, the method maintains a physically feasible
estimate while avoiding dense density, measurement, and gradient matrices.
The algorithm combined atom updates, weight updates, and spectral
refactorization under a rank penalty, which allowed the rank to adapt during
optimization. Its measurement computations were distributed across
workers without forming large matrices. Feasibility,
prediction consistency, monotone descent were established, along with an exact spectral proximal
characterization of the refactorization step. Numerical tests demonstrated favorable accuracy--runtime--memory tradeoffs relative to representative baselines.

\appendix
\section{Algorithm Analysis Proofs}
\label{app:algorithm-analysis-proofs}

\begin{proof}[Proof of Theorem~\ref{thm:proximal-spectral}]
Fix \(t\in\mathcal T_{\boldsymbol\rho}\), and let
\[
\boldsymbol S_t
=
\boldsymbol U_t
\operatorname{diag}(\lambda_{t,1},\ldots,\lambda_{t,D})
\boldsymbol U_t^*,
\qquad
\lambda_{t,1}\ge\cdots\ge\lambda_{t,D}.
\]
Under the exact-spectrum assumptions, the Ritz vectors in
\(\boldsymbol\Phi_{t,s}\) are the leading \(s\) eigenvectors of
\(\boldsymbol S_t\), and
\[
\boldsymbol\beta_{t,s}
=
\operatorname{proj}_{\boldsymbol{\Delta}_s}
(\boldsymbol\lambda_{t,s}),
\qquad
\boldsymbol\lambda_{t,s}
=
[\lambda_{t,1},\ldots,\lambda_{t,s}]^\top .
\]
Since \(\tau=0\), the truncation map removes only zero coefficients, and does
not change the represented density; hence,
\[
\boldsymbol\rho_{\widetilde{\mathcal X}_{t+1}^{(s)}}
=
\boldsymbol\Phi_{t,s}
\operatorname{diag}(\boldsymbol\beta_{t,s})
\boldsymbol\Phi_{t,s}^{*}.
\]
To prove (i), consider any
\(\boldsymbol\rho\in\mathcal D\) with
\(\operatorname{rank}(\boldsymbol\rho)\le s\), and write
\[
\begin{aligned}
\boldsymbol\rho
&=
\boldsymbol V
\operatorname{diag}(\gamma_1,\ldots,\gamma_D)
\boldsymbol V^*,\\
&\qquad
\gamma_1\ge\cdots\ge\gamma_D\ge0,
\quad
\sum_{i=1}^{D}\gamma_i=1,
\quad
\gamma_i=0 \ \text{for } i>s .
\end{aligned}
\]
It then holds that 
\[
\|\boldsymbol\rho-\boldsymbol S_t\|_F^2
=
\|\boldsymbol\gamma\|_2^2
+
\|\boldsymbol\lambda_t\|_2^2
-
2\operatorname{tr}(\boldsymbol\rho\boldsymbol S_t).
\]
For Hermitian matrices, von Neumann's trace inequality implies
\[
\operatorname{tr}(\boldsymbol\rho\boldsymbol S_t)
\le
\sum_{i=1}^{D}\gamma_i\lambda_{t,i}
\]
with equality when \(\boldsymbol\rho\) and \(\boldsymbol S_t\) are
simultaneously diagonalized with eigenvalues ordered in the same
nonincreasing order.~Thus,
\[
\|\boldsymbol\rho-\boldsymbol S_t\|_F^2
\ge
\sum_{i=1}^{D}(\gamma_i-\lambda_{t,i})^2 ,
\]
and equality is attained by choosing the eigenvectors of
\(\boldsymbol\rho\) to be the leading eigenvectors of \(\boldsymbol S_t\).
Since \(\gamma_i=0\) for \(i>s\), minimizing over all feasible
rank-\(s\) densities is equivalent to
\[
\min_{\boldsymbol\beta\in\boldsymbol{\Delta}_s}
\left\|
\boldsymbol\beta-\boldsymbol\lambda_{t,s}
\right\|_2^2 
\]
up to the constant
\(\sum_{i>s}\lambda_{t,i}^2\). Its minimizer is $ \boldsymbol\beta_{t,s}
=
\operatorname{proj}_{\boldsymbol{\Delta}_s}
(\boldsymbol\lambda_{t,s})
$
Thus,
\(\boldsymbol\rho_{\widetilde{\mathcal X}_{t+1}^{(s)}}\) minimizes
\(\|\boldsymbol\rho-\boldsymbol S_t\|_F^2\) over all
\(\boldsymbol\rho\in\mathcal D\) with rank at most \(s\). Since the objective
in \eqref{eq:rank-capped-proximal} depends on \(\mathcal X\) only through
\(\boldsymbol\rho_{\mathcal X}\), this proves (i).

For (ii), let \(\mathcal X\) be any feasible representation with active size
\(R\). By (i), it follows that
\[
\left\|
\boldsymbol\rho_{\widetilde{\mathcal X}_{t+1}^{(R)}}
-
\boldsymbol S_t
\right\|_F^2
\le
\left\|
\boldsymbol\rho_{\mathcal X}
-
\boldsymbol S_t
\right\|_F^2 .
\]
Moreover, since \(\tau=0\), the truncation map can only remove zero
coefficients, so the active size of
\(\widetilde{\mathcal X}_{t+1}^{(R)}\) is at most \(R\); hence,
\[
Q_t\!\left(\widetilde{\mathcal X}_{t+1}^{(R)}\right)
\le
Q_t(\mathcal X).
\]
Thus, a global minimizer of \(Q_t\) is attained by at least one member of the
candidate family
\(\{\widetilde{\mathcal X}_{t+1}^{(s)}\}_{s=1}^{R_{\max}}\). As a result, 
\eqref{eq:spectral-selection} returns a solution of
\eqref{eq:exact-proximal-problem}, which proves (ii).

As far as (iii) is concerned, define the density displacement
\[
\boldsymbol H_t
\coloneqq
\boldsymbol\rho_{\widetilde{\mathcal X}_{t+1}^{\rm prox}}
-
\boldsymbol\rho_t .
\]
Since
\(\widetilde{\mathcal X}_{t+1}^{\rm prox}\) minimizes \(Q_t\) and
\(\mathcal X_t\) is feasible, it follows that
\[
Q_t\!\left(\widetilde{\mathcal X}_{t+1}^{\rm prox}\right)
\le
Q_t(\mathcal X_t).
\]
Using
\(\boldsymbol S_t=\boldsymbol\rho_t+\eta_\rho\boldsymbol G_t\) and 
\[
\boldsymbol G_t
=
-
\nabla_{\boldsymbol\rho}
\mathcal L_\varepsilon
\bigl(
\boldsymbol\pi(\boldsymbol\rho_t)
\bigr),
\]
we obtain
\[
\left\langle
\nabla_{\boldsymbol\rho}
\mathcal L_\varepsilon
\bigl(
\boldsymbol\pi(\boldsymbol\rho_t)
\bigr),
\boldsymbol H_t
\right\rangle_F
+
\frac{1}{2\eta_\rho}
\|\boldsymbol H_t\|_F^2
+
\mu R\!\left(\widetilde{\mathcal X}_{t+1}^{\rm prox}\right)
\le
\mu R_t .
\]
By \(L_\varepsilon\)-smoothness on \(\mathcal D\), we have
\[
\begin{aligned}
\mathcal L_\varepsilon
\!\left(
\boldsymbol\pi
\bigl(
\boldsymbol\rho_{\widetilde{\mathcal X}_{t+1}^{\rm prox}}
\bigr)
\right)
&\le
\mathcal L_\varepsilon
\bigl(
\boldsymbol\pi(\boldsymbol\rho_t)
\bigr) \\
&\quad
+
\left\langle
\nabla_{\boldsymbol\rho}
\mathcal L_\varepsilon
\bigl(
\boldsymbol\pi(\boldsymbol\rho_t)
\bigr),
\boldsymbol H_t
\right\rangle_F
+
\frac{L_\varepsilon}{2}
\|\boldsymbol H_t\|_F^2 .
\end{aligned}
\]

Adding
\(\mu R(\widetilde{\mathcal X}_{t+1}^{\rm prox})\) on both sides and using the last inequality
\[
\begin{aligned}
J\!\left(\widetilde{\mathcal X}_{t+1}^{\rm prox}\right)
&\le
J_t
-
\left(
\frac{1}{2\eta_\rho}
-
\frac{L_\varepsilon}{2}
\right)
\|\boldsymbol H_t\|_F^2 \\
&=
J_t
-
\left(
\frac{1}{2\eta_\rho}
-
\frac{L_\varepsilon}{2}
\right)
\left\|
\boldsymbol\rho_{\widetilde{\mathcal X}_{t+1}^{\rm prox}}
-
\boldsymbol\rho_t
\right\|_F^2 .
\end{aligned}
\]
Since \(0<\eta_\rho<1/L_\varepsilon\), the coefficient is positive, and
\(J(\widetilde{\mathcal X}_{t+1}^{\rm prox})\le J_t\), which concludes the proof of
\eqref{eq:proximal-descent}.
\end{proof}

{\scriptsize
\bibliographystyle{IEEEtran}
\bibliography{references}

@article{Gross2010CSQST,
  title   = {Quantum State Tomography via Compressed Sensing},
  author  = {Gross, David and Liu, Yi-Kai and Flammia, Steven T. and Becker, Stephen and Eisert, Jens},
  journal = {Phys. Rev. Lett.},
  volume  = {105},
  number  = {15},
  pages   = {150401},
  year    = {2010},
  doi     = {10.1103/PhysRevLett.105.150401}
}

@article{Flammia2012CSQST,
  title   = {Quantum Tomography via Compressed Sensing: Error Bounds, Sample Complexity and Efficient Estimators},
  author  = {Flammia, Steven T. and Gross, David and Liu, Yi-Kai and Eisert, Jens},
  journal = {New J. Phys.},
  volume  = {14},
  number  = {9},
  pages   = {095022},
  year    = {2012},
  doi     = {10.1088/1367-2630/14/9/095022}
}

@misc{Haah2017SampleOptimalQST,
  title         = {Optimal Lower Bounds for Quantum State Tomography},
  author        = {Scharnhorst, Thilo and Spilecki, Jack and Wright, John},
  howpublished  = {arXiv:2510.07699},
  year          = {2025},
  eprint        = {2510.07699},
  archivePrefix = {arXiv},
  primaryClass  = {quant-ph},
  doi           = {10.48550/arXiv.2510.07699}
}

@article{Bolduc2017PGDQST,
  title   = {Projected Gradient Descent Algorithms for Quantum State Tomography},
  author  = {Bolduc, Eliot and Knee, George C. and Gauger, Erik M. and Leach, Jonathan},
  journal = {npj Quantum Inf.},
  volume  = {3},
  pages   = {44},
  year    = {2017},
  doi     = {10.1038/s41534-017-0043-1}
}

@article{Kyrillidis2018NonconvexQST,
  title   = {Provable Compressed Sensing Quantum State Tomography via Non-Convex Methods},
  author  = {Kyrillidis, Anastasios and Kalev, Amir and Park, Dohyung and Bhojanapalli, Srinadh and Caramanis, Constantine and Sanghavi, Sujay},
  journal = {npj Quantum Inf.},
  volume  = {4},
  pages   = {36},
  year    = {2018},
  doi     = {10.1038/s41534-018-0080-4}
}

@article{Hsu2024RGDQST,
  title   = {Quantum State Tomography via Nonconvex {Riemannian} Gradient Descent},
  author  = {Hsu, Ming-Chien and Kuo, En-Jui and Yu, Wei-Hsuan and Cai, Jian-Feng and Hsieh, Min-Hsiu},
  journal = {Phys. Rev. Lett.},
  volume  = {132},
  number  = {24},
  pages   = {240804},
  year    = {2024},
  doi     = {10.1103/PhysRevLett.132.240804}
}

@article{Wang2024EfficientFGD,
  title   = {Efficient Factored Gradient Descent Algorithm for Quantum State Tomography},
  author  = {Wang, Yong and Liu, Lijun and Cheng, Shuming and Li, Li and Chen, Jie},
  journal = {Phys. Rev. Res.},
  volume  = {6},
  number  = {3},
  pages   = {033034},
  year    = {2024},
  doi     = {10.1103/PhysRevResearch.6.033034}
}

@article{Gaikwad2025GDQST,
  title   = {Gradient-Descent Methods for Fast Quantum State Tomography},
  author  = {Gaikwad, Akshay and Torres, Manuel Sebastian and Ahmed, Shahnawaz and Kockum, Anton Frisk},
  journal = {Quantum Sci. Technol.},
  volume  = {10},
  number  = {4},
  pages   = {045055},
  year    = {2025},
  doi     = {10.1088/2058-9565/ae0baa}
}

@misc{Cai2025OnlineSGDQST,
  title         = {Online Quantum State Tomography via Stochastic Gradient Descent},
  author        = {Cai, Jian-Feng and Jiao, Yuling and Li, Yinan and Lu, Xiliang and Yang, Jerry Zhijian and You, Juntao},
  howpublished  = {arXiv:2507.07601},
  year          = {2025},
  eprint        = {2507.07601},
  archivePrefix = {arXiv},
  primaryClass  = {quant-ph},
  doi           = {10.48550/arXiv.2507.07601}
}

@article{Cramer2010EfficientQST,
  title   = {Quantum State Tomography for Matrix Product Density Operators},
  author  = {Qin, Zhen and Jameson, Casey and Gong, Zhexuan and Wakin, Michael B. and Zhu, Zhihui},
  journal = {IEEE Trans. Inf. Theory},
  volume  = {70},
  number  = {7},
  pages   = {5030--5056},
  year    = {2024},
  doi     = {10.1109/TIT.2024.3360951}
}

@article{Baumgratz2013ScalableDensity,
  title   = {Quantum State Tomography with Locally Purified Density Operators and Local Measurements},
  author  = {Guo, Yuchen and Yang, Shuo},
  journal = {Commun. Phys.},
  volume  = {7},
  pages   = {322},
  year    = {2024},
  doi     = {10.1038/s42005-024-01813-4}
}

@misc{Cai2026MPOQST,
  title         = {Online {Riemannian} Gradient Descent for Quantum State Tomography with Matrix Product Operators},
  author        = {Cai, Jian-Feng and Li, Jingyang and Zhang, Xiaoqun and Zhang, Yuanwei},
  howpublished  = {arXiv:2605.04533},
  year          = {2026},
  eprint        = {2605.04533},
  archivePrefix = {arXiv},
  primaryClass  = {quant-ph},
  doi           = {10.48550/arXiv.2605.04533}
}

@article{Huang2020ClassicalShadows,
  title   = {Predicting Many Properties of a Quantum System from Very Few Measurements},
  author  = {Huang, Hsin-Yuan and Kueng, Richard and Preskill, John},
  journal = {Nat. Phys.},
  volume  = {16},
  number  = {10},
  pages   = {1050--1057},
  year    = {2020},
  doi     = {10.1038/s41567-020-0932-7}
}

@misc{Torlai2018NNQST,
  title         = {Parametric Quantum State Tomography with {HyperRBMs}},
  author        = {Tonner, Simon and Tran, Viet T. and Kueng, Richard},
  howpublished  = {arXiv:2601.20950},
  year          = {2026},
  eprint        = {2601.20950},
  archivePrefix = {arXiv},
  primaryClass  = {quant-ph},
  doi           = {10.48550/arXiv.2601.20950}
}

@article{Alquier2013RankPenalizedQST,
  title   = {Rank-Penalized Estimation of a Quantum System},
  author  = {Alquier, Pierre and Butucea, Cristina and Hebiri, Mohamed and Meziani, Katia and Morimae, Tomoyuki},
  journal = {Phys. Rev. A},
  volume  = {88},
  number  = {3},
  pages   = {032113},
  year    = {2013},
  doi     = {10.1103/PhysRevA.88.032113}
}

@article{Guta2012RankModelQST,
  title   = {Quantum Information Criteria for Model Selection in Quantum State Estimation},
  author  = {Yano, Hiroshi and Yamamoto, Naoki},
  journal = {J. Phys. A: Math. Theor.},
  volume  = {56},
  number  = {40},
  pages   = {405301},
  year    = {2023},
  doi     = {10.1088/1751-8121/acf747}
}

@article{Burer2003LowRankSDP,
  title   = {A Nonlinear Programming Algorithm for Solving Semidefinite Programs via Low-Rank Factorization},
  author  = {Burer, Samuel and Monteiro, Renato D. C.},
  journal = {Math. Program., Ser. B},
  volume  = {95},
  number  = {2},
  pages   = {329--357},
  year    = {2003},
  doi     = {10.1007/s10107-002-0352-8}
}

@inproceedings{Hazan2008SparseSDP,
  title     = {Sparse Approximate Solutions to Semidefinite Programs},
  author    = {Hazan, Elad},
  booktitle = {LATIN 2008: Theoretical Informatics},
  series    = {Lecture Notes in Comput. Sci.},
  volume    = {4957},
  pages     = {306--316},
  year      = {2008},
  publisher = {Springer},
  doi       = {10.1007/978-3-540-78773-0_27}
}

@inproceedings{Jaggi2013FrankWolfe,
  title     = {Revisiting {Frank--Wolfe}: Projection-Free Sparse Convex Optimization},
  author    = {Jaggi, Martin},
  booktitle = {Proc. 30th Int. Conf. Mach. Learn.},
  series    = {Proc. Mach. Learn. Res.},
  volume    = {28},
  number    = {1},
  pages     = {427--435},
  year      = {2013},
  publisher = {PMLR}
}

@article{Lanczos1950Iteration,
  title   = {An Iteration Method for the Solution of the Eigenvalue Problem of Linear Differential and Integral Operators},
  author  = {Lanczos, Cornelius},
  journal = {J. Res. Natl. Bur. Stand.},
  volume  = {45},
  number  = {4},
  pages   = {255--282},
  year    = {1950},
  doi     = {10.6028/jres.045.026}
}

@book{Saad2011NumericalMethods,
  title     = {Numerical Methods for Large Eigenvalue Problems},
  author    = {Saad, Yousef},
  edition   = {Revised},
  publisher = {SIAM},
  address   = {Philadelphia, PA, USA},
  year      = {2011},
  doi       = {10.1137/1.9781611970739}
}

@book{NielsenChuang2010QuantumComputation,
  title     = {Quantum Computation and Quantum Information},
  author    = {Nielsen, Michael A. and Chuang, Isaac L.},
  edition   = {10th anniversary},
  publisher = {Cambridge Univ. Press},
  address   = {Cambridge, U.K.},
  year      = {2010},
  doi       = {10.1017/CBO9780511976667}
}

@article{Hradil1997QuantumStateEstimation,
  title   = {Quantum-State Estimation},
  author  = {Hradil, Zden{\v{e}}k},
  journal = {Phys. Rev. A},
  volume  = {55},
  number  = {3},
  pages   = {R1561--R1564},
  year    = {1997},
  doi     = {10.1103/PhysRevA.55.R1561}
}

@misc{taherpour2026distributed,
  title         = {Distributed Learning of Quantum State Tomography Robust to Readout Errors},
  author        = {Taherpour, Amirhossein and Sadeghi, Alireza and Giannakis, Georgios B.},
  howpublished  = {arXiv:2604.14428},
  year          = {2026},
  eprint        = {2604.14428},
  archivePrefix = {arXiv},
  primaryClass  = {quant-ph},
  doi           = {10.48550/arXiv.2604.14428}
}
}

\end{document}